\documentclass[format=acmsmall, review=false]{acmart}
\usepackage{thm-restate}
\usepackage{acm-ec-23}
\usepackage{abraces}% http://ctan.org/pkg/abraces
\usepackage{easybmat}
\usepackage{booktabs} % For formal tables
\usepackage{natbib}
\newcommand{\ignorfe}[1]{}
\newcommand{\R}{\mathbb{R}}
\usepackage[ruled]{algorithm2e} % For algorithms

\SetAlFnt{\small}
\SetAlCapFnt{\small}
\SetAlCapNameFnt{\small}
\SetAlCapHSkip{0pt}
\IncMargin{-\parindent}
\providecommand{\keywords}[1]
{
  \small	
  \textbf{\textit{Keywords---}} #1
}
\newtheorem{theorem}{Theorem}[section]
\newtheorem{lemma}[theorem]{Lemma}

\usepackage{xcolor}

\newtheorem{defn}[theorem]{Definition}
\newtheorem{ex}[theorem]{Example}
\newtheorem{prop}[theorem]{Proposition}

\newenvironment{definition}[1][Definition]{\begin{trivlist}
\item[\hskip \labelsep {\bfseries #1}]}{\end{trivlist}}

\newcommand{\ignore}[1]{}
\setcitestyle{acmnumeric}

\usepackage{abraces}% http://ctan.org/pkg/abraces
\usepackage{easybmat}
\usepackage{booktabs} % For formal tables

%\usepackage[ruled]{algorithm2e} % For algorithms
%\renewcommand{\algorithmcfname}{ALGORITHM}
%\SetAlFnt{\small}
%\SetAlCapFnt{\small}
%\SetAlCapNameFnt{\small}
%\SetAlCapHSkip{0pt}
%\IncMargin{-\parindent}
\providecommand{\keywords}[1]
{
  \small	
  \textbf{\textit{Keywords---}} #1
}

% Title portion. Note the short title for running heads

\title{Wages and Utilities in a Closed Economy}%- A Strategic Analysis}
\author{Sanyukta Deshpande and Milind Sohoni}
% note that the abstract must come before \maketitle
\begin{abstract}
The broad objective of this paper is to propose a mathematical model for the study of causes of wage inequality and relate it to choices of consumption, the technologies of production, and the composition of labor in an economy. The paper constructs a {\em  Simple Closed Model}, or an SCM, for short, for closed economies, in which the consumption and the production parts are clearly separated and yet coupled. The model is established as a specialization of the Arrow-Debreu model and its equilibria correspond directly with those of the general Arrow-Debreu model. The formulation allows us to identify the combinatorial data which link parameters of the economic system with its equilibria, in particular, the impact of consumer preferences on wages. The SCM  model also allows the formulation and explicit construction of the \textit{consumer choice game}, where expressed utilities of various labor classes serve as strategies with total or relative wages as the pay-offs. We illustrate, through examples, the mathematical details of the consumer choice game. We show that consumer preferences, expressed through modified utility functions, do indeed percolate through the economy, and influence not only prices but also production and wages. Thus, consumer choice may serve as an effective tool for wage redistribution.    
\end{abstract}

\begin{document}

% Title page for title and abstract only.
\begin{titlepage}

\maketitle

\end{titlepage}

\section{Introduction}
 
 As the present millenial edition of the global economy unfolds, many authors and agencies have pointed out several undesirable features that have emerged. These are the paucity of "good" jobs, rising inequality, excessive consolidation, and the possibility of linkages between modern production and consumption processes with climate change. 
 
 This paper deals with the first two problems, {\em viz.}, the allocation of jobs and wages and rising inequality in wage incomes in the economy, and its structural determinants. For this purpose, we construct a simple mathematical model, called the \emph{SCM Model}, which illustrates some of the key features of the dependence of wages on the production and consumption choices within the economy. Our model is divided into two parts: the production sector, determined by a technology matrix $T$ that utilizes $m$ labor classes to produce $n$ goods, where prices are known and revenue maximization is used to determine the quantities of goods produced, labor utilization, and wages. This part of the model assumes prices as a given input that cannot be changed. The second part is the consumption sector, modeled as a Fisher market and a utility matrix $U$. This sector of the economy assumes the production sector as a given, i.e., wages, which are now disposable incomes, and quantities of goods produced, and allocates goods based on disposable incomes held by each labor class, and their expressed utility matrix $U$, and determines prices.
  
This paper demonstrates the connection between $U$, the utility matrix, and the wages obtained by various labor classes, as implemented by $T$. In other words, it traces the connection between personal consumption choices, prices of goods, their production, and finally, the wages received. Furthermore, it shows that having a "private and real" utility $U^t$ and posting or posturing a different $U$ into the economy does indeed alter wages and has the potential to improve both social welfare and the relative welfare for certain classes. The setup of the SCM model then allows us to define the \emph{consumer choice game} (CCG), where the manipulation of $U$ is the strategy, and the relative or total welfare, as measured by the allocation of goods and their utilities according to $U^t$, are the payoffs.

There exists a rich body of literature on the general theory of equilibrium \cite{harsanyi1988general, samuelson1953prices}. The theory seeks to model closed economies and explains the prices of goods and services in a market and how different markets are interconnected. Closed economy models have been studied extensively, starting with Walras's model and the most prominent being the Arrow-Debreu (A-D) market \cite{starr2011general}. Although the existence of equilibrium is proved in models like the A-D market, most of these have seen state policies as the strategy space, not consumer choice \cite{arrow1954existence}. Additionally, many of these models suffer from computational intractability or not providing sufficient ground to model behaviors and strategy spaces as analytical games, which is the primary focus of this study. In the celebrated text, \cite{samuelson1948foundations, backhouse2015revisiting}, Samuelson presents the theory of consumer behavior using ordinal utilities, backed by the principles of "maximizing behavior of consumers and firms" and "stability of equilibrium", which also form the base of our work. Although utility functions as strategies was pointed out \cite{chipman1982samuelson}, it was for the determination of prices and allocations, not wages.

The manipulation of $U$ has also been studied in connection with the impact of advertising on the competition between firms and their profitability. The connection between branding, i.e., managing utilities, and revenues and market shares is well understood \cite{wernerfelt1991brand, delgado2001brand}, but without an understanding of how consumption links to production. Pressure groups utilized this concept to label certain products, such as coffee, as "compliant" with desirable ideals, such as fair wages for coffee-bean pickers \cite{levi2003fair}. Our paper offers a model of how such advice is likely to play out in the economy. Our analysis focuses largely on a closed economy and the study of its modes as functions of the system's parameters.

On the production side, the effects of technology on wages have also been studied in Ricardian and Keynesian economics, mainly in the context of globalization and unemployment \cite{ray1998development, keynes2010economic}. Building on the theory of comparative advantage, other models such as the Heckscher-Ohlin-Samuelson model \cite{bergstrand1990heckscher}, the specific factors model \cite{eaton1987dynamic}, and the most recent Kremer-Maskin model \cite{kremer1996wage}, have analyzed the link between production and technological advancements made available through trade. As long as the division of inputs and outputs of the key blocks of the SCM model are maintained, such variations in the production function may be incorporated into the SCM model. They may help link the effects of globalization and trade with variable consumer choice.

In the computational science community, there has been growing interest in the fields of market equilibria and how strategic agents may manipulate market equilibria under various settings of the economy \cite{branzei2017nash, amanatidis2022allocating, barman2019fair, branzei2014fisher, codenotti2004efficient, branzei2016algorithmic}. Specifically, in connection with our work, the Fisher Market Game \citep{adsul2010nash,garg2017market} has shown that going to the market with postured utilities $U$ (instead of the real ones, $U^t $) can bring rewards in terms of more favorable allocations. The strategic effects of feigning have also been studied under various settings \cite{babaioff2014efficiency, chen2012incentive, chen2011profitable}. However, these models only relate to the consumption side of the economy.  In another related body of work, the Shapley-Shubik Market Game \cite{shapley1977trade}  explaining the formation of prices is studied extensively,  with algorithmic developments on the convergence of proportional dynamics in exchange economies \cite{codenotti2005polynomial,wu2007proportional, zhang2020existence,branzei2021proportional,birnbaum2011distributed,cheung2018dynamics,bei2019ascending}. 

This paper aims to complement these studies by introducing a simple macroeconomic production model helping define a closed economy in its entirety. This allows us to find a consistent set of prices of goods, wages for each labor class, and allocations of goods among classes, resulting in an equilibrium in the economy. Utilizing microeconomic concepts like the Fisher market, we model the effects of consumer choice strategy on wages, production, and allocations. By coupling the co-dependent production and consumption sides of the economy, this work extends the conclusion of the Fisher Market Game to show that consumer choice may indeed be used to derive wages as well.  In classical terms, our work aims to develop and study the dependence of market equilibria on various strategic factors that define the economy, set up via the SCM model and executed via the CCG.

The work draws heavily from existing models discussed earlier, particularly the Arrow-Debreu model and its predecessor, the Fisher model \cite{arrow1954existence}. It also takes inspiration from Sraffa's accounting methods for calculating prices and wages \citep{abraham1979theory}, the use of the theory of value to calculate labor inventory, the use of dual variables for calculating wages, and the use of utility functions to compute the allocation of goods \cite{starr2011general}.

The rest of the paper is organized as follows. In Section \ref{sec:modeling} we describe the SCM Model $\mathcal{H(C,P)}$ as composed of two interconnected systems, the production model ${\mathcal P}$, and the consumption model ${\mathcal C}$. The production model ${\mathcal P}$ has the key parameter $T$, the {\em technology matrix}, and $Y$, the size of individual labor classes. The input variable is the price vector $p$, while the "outputs" are wages $w$ of agents, and the production vector $q$. The consumption model ${\mathcal C}$ consists of the key parameter $U$, the utility matrix, and the inputs as the wages $w$, and production $q$. The "output"  variables are $p$, the prices, and $X$, the allocation of goods to labor classes (or agents). 

We define the market equilibrium in the SCM model, i.e., \textit{SM equilibrium} as $\eta = \{p,q, w, X\}$ with its parameters satisfying both production and consumption system constraints. We also define two global optimization functions $f_{\mathcal C}$ and $f_{\mathcal P}$, which couple ${\mathcal C}$ and ${\mathcal P}$. They allow for a possible {\em tatonnement} as an iterative interaction between ${\mathcal C}$ and ${\mathcal P}$, which we only briefly discuss. 

In Section \ref{sec: smexistence}, we cast the SCM model ${\mathcal H}$ as an Arrow-Debreu (A-D) market. We show the equivalence of SM equilibria, with the equilibria of the corresponding A-D market, proving the existence of equilibrium in the SCM model. Thus, this connects the two concepts and also gives an explicit description of the dependence of the A-D equilibria on the parameters of the economy.

 In Section \ref{sec: ccg}, %we use the above results to illustrate a particular market of three labor classes and three goods, and examine the vicinity of a particular fixed point $\mathcal{H}_3$. 
 we define the Consumer Choice Game (CCG) as the game between various labor classes with the expressed utility matrix $U$ as the strategy space and the private utilities and wages as the payoffs. For this, we associate key combinatorial data, {\em viz.}, the Fisher solution forest, with a given market solution. This data determines the regime of interactions between various agents and helps compute the dependence of the payoffs on the matrix $U$ locally. If $U$ changes dramatically, then the market may move to a neighboring Fisher forest, and a new regime, and a new local payoff function.

In Section \ref{sec: correspondence}, we further analyze the combinatorial data arising from an equilibrium point. First, we show through a 2-player example, the decomposition of the strategy space, i.e., the $U$-space into various regions indexed by Fisher forests. This also leads to a correspondence ${\mathcal N}$ between the strategy space and the space of possible pay-offs. In other words, ${\mathcal N}\subseteq \mathbb{R}^2 \times \mathbb{R}^2$. We show that ${\mathcal N}$ is largely a 2-dimensional manifold. 

In Section \ref{sec: roleofstrategy}, we take up the case of a small market of soaps. We utilize the SCM model theory to demonstrate the use of expressed utilities $U$, market structures as described through Fisher solution forests, and lastly, technological updates and labor migration carried out through modifications to $T$ and $Y$, as devices to alter payoffs and allocations in the market.

Finally, in Section \ref{sec: conclusions}, we conclude by pointing out what was achieved, its economic significance, and possible future directions.  

\section{The SCM model} \label{sec:modeling}
In this section, we formally set up the Simple Closed Model (SCM).   Section \ref{subsec:notation} develops the notation and highlights the internal variables of SCM. Section \ref{sec:prodandcons} defines the production and consumption segments of the economy and poses the two as correspondences. Using these, Section \ref{subsec:equilibrium} defines the market equilibrium formally.

\subsection{The basic notation and assumptions}\label{subsec:notation}
\begin{itemize}
\item   
A good is a fixed unit of service or an output of a 
manufacturing plant, made available in a fixed time interval, 
called {\em epoch}. The set of goods, indexed by $j$, is denoted by 
${\mathcal G}=\{ g_1 , \dots , g_n \}$. The column vectors $q$ and $p$ denote the amount of good $g_j $ manufactured in the epoch and its price, respectively.
\item
The entire population in the economy consists of agents
divided into distinct classes, according to their expertise and training.
Set ${\mathcal L}=\{ L_1 ,\ldots, L_m \}$, indexed by $i$,  denotes these classes of labor, with each
class $L_i $ capable of devoting $Y_i $ units (e.g., person-years) of labor in 
every epoch. Thus, $Y$ forms a $m\times 1 $ column vector, indicating labor availability.
\item Each good $g_j $ has exactly one production process or technology 
$T_j$ and each $T_j$
is represented as a column vector $T_j : {\mathcal L}   \rightarrow \R$. 
%These tell us the amount of labor required from each labor-class to produce one unit of good $g_j $. 
The class of all such technologies is denoted
by ${\mathcal T}$, which is represented as a $m\times n$-matrix $T$ with
column $T_j $. Thus, in  matrix $T$, the entry $T_{ij}$ is the
amount of labor $L_i$ needed to produce a unit of $g_j$.
\item For each labor class $i$, there is a linear utility function $U_i :\R^n \rightarrow \R$,  that maps $x\in \R^n $, the bundle of goods allocated.  We assume that utilities are the same for all persons in one labor class and are measured in the units - happiness per person per unit weight. For example, $u_{ij}$ is the happiness derived by a person from class $L_i$ by consuming a unit of good $g_j$.\\
As $U_i $ is linear, we write it as $U_i (x)=\sum_{j\in[n]} u_{ij} x_j $, where $u_{ij}\geq 0$ are constants. We assume that for every good $i$ and buyer $j$, there exist buyer $j'$ and good $i'$ respectively, such that $u_{ij'}, u_{i'j} > 0$. We denote the matrix formed by $u_{ij}$ entries as U.

\end{itemize}

Further, without loss of generality, we assume that 
for each labor class, there are technologies 
which utilize them, although the demand for a specific good is generated by the economy. We also assume that the entries
of $T,Y$, and $U$ are all in general position and satisfy no algebraic 
relation (with rational coefficients) amongst themselves.

\noindent
Under this notation, an economy with given specifications $(T,Y,U)$ (and implicit description of $\mathcal{ G, L}$) is parameterized by a {\em non-negative} tuple
$(p,q,w,X)$ in each epoch, where 
\begin{enumerate}

\item $q$ is a $n\times 1$ column vector with $q_j$ 
is the amount of good $g_j $ manufactured,  

\item $p$ is an $n\times 1$ column vector and $p_j $ is the price of each good 
$g_j $, 

\item $w$ is an $1 \times m $ row vector where $w_i $ are the wages 
received by each person in labor class $L_i$ 

\item  $X$ is a $m\times n$-matrix and 
$x_{ij}$, the total amount of good $G_j $ consumed by labor $L_i $ 
\end{enumerate}
The tuple $(p,q,w,X)$ must satisfy additional {\em production} and {\em consumption} conditions specified below. 
\subsection{Production and Consumption} \label{sec:prodandcons}
\subsubsection{Production} \label{subsec: production}
The production space ${\mathcal P}(T,Y; p)$ is 
the collection of all wages $w$ and quantities $q$  such that Eq. \eqref{eq: prodspace1}- \eqref{eq: prodspace3} are satisfied.
\begin{align}
Tq \leq Y \label{eq: prodspace1} \\
q_j \cdot (p_j - (wT)_j) \geq 0  \label{eq: prodspace2} \\
q, w\geq 0 \label{eq: prodspace3}
\end{align}

Note that Eq. \eqref{eq: prodspace1} says that the quantities of goods produced are limited by labor constraints, while Eq. \eqref{eq: prodspace2} says that unprofitable goods are not produced. Consider next, the global maximization function $f_{\mathcal P}$ as $\sum_j p_j q_j$, i.e, revenue maximization and the linear program $P_{\mathcal P}$ and its dual LP $DP_{\mathcal P}$ as given in Eq. \eqref{eq8}.
\begin{equation}
\label{eq8}
P_{\mathcal P}: \  = \left[ \begin{matrix}
\displaystyle \max_q &  p^T q  \\
\textrm{s.t.} 
& T q & \leq & Y \\& \displaystyle \ q_j & \geq & 0 & & \forall j \in [n]
\end{matrix} \right]; \:  \ \ 
DP_{\mathcal P}: \  = \left[ \begin{matrix}
\displaystyle \min_{w} &  w Y  \\
\textrm{s.t.} \ &
\displaystyle  w T &\geq& p^T \\ & \displaystyle \ w_i & \geq & 0 & & \forall i \in [m]
\end{matrix} \right]
\end{equation}
\noindent Utilizing these, we further define the production correspondence as: 
\begin{equation}
    CP(T, Y; p)=\{q, w | q, w \text{  are simultaneous solutions to $P_{\mathcal P}$ and $DP_{\mathcal P}$} \} \label{eq: procorrespondence}
\end{equation}  
By complementary slackness, any $(q', w' )\in CP(T, Y; p) $ satisfies $q'_j ((w'T)_j -pj)=0$ and $(Y_i -(Tq')_i )w'_i =0$. Thus, we see that  if $q'_j >0$ then $(w'T)_j-p_j =0$, and hence $(q',w')\in {\mathcal P}(p)$ as well. Thus, $CP(p)\subseteq {\mathcal P}$. 

Moreover, $w'_i >0$ implies that $(Tq')_i =Y_i $ and more importantly, if $Y_i > (Tq')_i$ then $w_i=0$, i.e., if a particular labor class constraint is not tight then the wages of that class are zero. This implies that the production process conserves money, i.e., $pq' = w'Tq = w'Y$, and the revenue of the producers goes to the labor classes as wages. Also note that the inequality in the dual program, {\em viz.,} $w^T T \geq p^T$, is opposite to the profitability constraint Eq. \eqref{eq: prodspace2}. 

The choice of the above optimization model as a model for production is similar to many earlier economic models, the most popular being the {\em production frontier} model where the feasible region of $P_{\mathcal P}$ from Eq. \eqref{eq8} is used to define the feasible space of the production of goods \cite{ray1998development}. The optimal point is determined either by a family of {\em indifference curves} or an explicit (concave) utility function. The use of optimal production is also the basis of the {\em comparative advantage} argument used by Ricardo to illustrate why trade is beneficial to both parties. When the optimal $\bar{q}$ is a vertex of the production frontier, the wages $\bar{w}$ determined by the theory of marginal productivity matches with our assignment of wages as dual variables. We illustrate the consumption model, and specifically, a trade case in Appendix \ref{ex: prodex1} and \ref{ex: prodex2}, respectively.

\subsubsection{Consumption} \label{subsec:consumption}
We model consumption as a Fisher market \cite{adsul2010nash}. Recall that, in such a market, there are $m$ buyers (i.e., ${\mathcal L}$ labor classes, in our case), and $n$ goods ${\mathcal G}$.  Each class $L_i $ is endowed with money $W_i =w_i Y_i\geq 0$, and each good $G_j $ has quantity $q_j \geq 0$ for sale. The utility function is given by the matrix $U=(u_{ij})$. %We say that a good $g_j $ is {\bf useful} if there is an agent $i$ with $w_i >0$ and $u_{ij}>0$. In other words, there is a potential buyer.  

Solution of the Fisher market are {\em equilibrium prices} $p = (p_j)$ with $j\in [n]$ with $p_j q_j >0$ for some $j$, and allocations
$X = (x_{ij} )$ with $i\in [m],j\in [n]$ and $x_{ij}\geq 0$ such that they satisfy the following two constraints:

\begin{itemize}
\item \textbf{Market Clearing}: All useful goods are completely sold and the money of all the buyers is exhausted, i.e.\\
$\forall j \in [n]$,
$ \sum _{i\in [m]} x_{ij} = q_j$ \ and \
$\forall i \in [m]$,
$ \sum _{j\in[n]} p_j x_{ij} = w_i Y_i$ 

\item \textbf{Optimal Goods}: Each buyer buys only those goods which give her the maximum utility per unit of money i.e  if $p_j x_{ij} > 0$, then $\frac {u_{ij}} { p_j} = \max_{j'\in [n], u_{ij'}\neq 0} \frac {u_{ij'}} {p_{j'}} $. \\
This value is denoted as the {\em bang per buck} $bb_i$ and it is defined only for buyers $i$ with $W_i >0$ and then $0<bb_i < \infty$. A generalization to splc utilities is given in Appendix \ref{app: splc}.
\end{itemize}

\noindent
%\begin{remark}
%Note that the condition to compute $bb_i$ applies only to agents $i$ who have spent money, i.e., and not to agents who have no money. Moreover, it uses all $j'$ where $u_{ij'}>0$. In other words, if $p_{j'} =0$ (and even if $q_{j'}=0$) then it must be the case that $u_{ij'}=0$ and that buyer $i$ has no use for good $j'$. 
%\end{remark}

The consumption correspondence is defined as: 
\begin{equation}
    CC(U;q,w)=\{x,p | x,p \text{ satisfy the Fisher market conditions} \} \label{eq: concorrespondence}
\end{equation}  
Given $(x,p)$ as above, the set $X=\{ (i,j)\: | \: x_{i,j}>0 \}$ is called the {\em Fisher graph} of the solution $x$ and records the combinatorial structure of the allocation $x$. It is well known that solutions to the Fisher market are optimal points of the Eisenberg-Gale maximization function, a money-weighted combination of the utilities of the buyers \cite{jain2010eisenberg}. This  allows us to define a global optimization function $f_{\mathcal C}$ for the consumption process. 

As with the production process, money is also conserved in the consumption process- all money with the buyers, i.e., the labor classes, ends up with the producers as revenue. The optimal goods condition is also applicable to goods $j'$ with $q_{j'}=0$. The Fisher market sets the prices of such goods high enough so that no buyer finds it profitable to shift consumption even if an infinitesimal amount of such goods was produced. Hence, this consumption part sees price as the only constraint and no artificial shortage of desirable goods. Also, note that scaling the utilities of a buyer by a fixed constant does not change the equilibrium or the prices, since it is the relative importance of goods for a given agent that determines the relation between the prices of the goods she consumes. We illustrate the consumption process through an example in Appendix \ref{ex: conex1}.

%We end this subsection with a technical lemma \textcolor{blue}{(?)} which is useful when there are goods $j$ with $q_j =0$. In this case, the determination of the price $p_j >0$ is required and is not determined completely by the Fisher conditions. 

\subsection{Equilibrium in SCM}\label{subsec:equilibrium}

For the economy specifications $(T,Y,U)$, we next utilize correspondences $CP(T,Y)$ and $CC(U)$ to an equilibrium in the SCM model. % and refer to Appendix \ref{app: smexample} for an illustration.
The production happens via $CP$, which takes price $p$ as the input and generates productions and wages $(q,w)$ by maximizing revenues, while $CC$ uses $(q,w)$, to generate the prices via the Fisher market.
\begin{defn}
Given the market $SCM(T,Y,U)$ for the data $T,Y,U$, the tuple $\eta=(q,w,p,X)$ is a {\bf simple closed model equilibrium}, or simply, SM equilibrium, if $(q,w)\in CP(T,Y;p)$ and $(p,X)\in CC(U;q,w)$. 
\end{defn}

The structure of SCM and the definition of SM equilibrium also allow us to naturally define a weak  \emph{tatonnement} process \cite{uzawa1960walras}. The basic objective of the tatonnement is to arrive at an equilibrium $\eta =(p,q,w,X)$ such that (i) $p,X$ are the outputs of the Fisher market if the inputs are $w, q$, on the consumption side, and (ii) $q,w$ are the optimal solutions to $P_{\mathcal{P}}, DP_{\mathcal{P}}$ on input $p$, on the production side. Briefly, the process begins with a candidate solution $\eta $ and checks if $\eta $ is indeed an equilibrium. If not, it alternatively updates the consumption side and the production side by solving the global optimization functions $f_{\mathcal{C}}$, and $f_{\mathcal{P}}$. However, the \emph{tatonnement} does not always converge.  It may alternate between multiple states of the economy if production and consumption do not agree on a common set of active goods and classes but cyclically choose two or more states. We provide more details on its description, and analysis, along with examples in Appendix \ref{app: tatonnement}.

\section{Existence of SM Equilibrium} \label{sec: smexistence}
This section proves the existence of equilibrium $SM(T,Y,U)$. We construct an Arrow-Debreu (A-D) market $AD(T,Y,U)$ from the SCM model $SCM(T,Y,U)$, and show the equivalence of the equilibrium points in the SCM model and market equilibria in the corresponding A-D market. We follow Starr, R. M. (2011), \citep{starr2011general} for the description of the Arrow-Debreu market, and provide specific page numbers for reference\footnote{We assume splc utilities in this section}. %Let us first begin with the description of the A-D market and its equilibria. \textcolor{blue}{In this section, we assume splc utilities.}

\subsection{ Arrow-Debreu Market Notation and Conditions} \label{subsec:AD}
The A-D market is given by the following set of specifications. 

\begin{enumerate}
\item The Arrow Debreu (A-D) market consists of $n$ goods, $m$ households {\em viz.}, $H_1 ,\ldots,H_m $, and $l$ firms, {\em viz.,} $F_1 , \ldots , F_l $.  
\item Each firm $F_j $ has a non-empty possible production technology set $Y_j \subseteq \R^n$, where $n$ is the number of goods. A vector $y$ in $Y_j$ represents a technically possible combination of inputs (raw goods) and outputs (produced goods). A negative entry in $y$ represents an input, and a positive entry, an output. We assume that for every $j$, the set $Y_j$ is convex, bounded, closed, and includes $\bar{0}$, the zero vector.  We also assume that there can be no outputs without inputs and there does not exist any way to transform outputs into inputs. (page 115, 165) %i.e. if $y \in Y$ and $y \neq 0$, then $-y \not\in Y$. %The first condition is enforced by requiring that each $y_j\in Y_j$ has a negative component. The second is the absence of cycles in the collection of vectors of $\{ Y_j \}_j$. 
\item For a given price vector $p\in \R^n $, we define $S_j (p)$ as the optimal  supply function of firm $j$  \begin{equation}
       S_j (p) = \{ \ y^*
    |y^* \in Y_j , \ p \cdot y^* \geq p \cdot y \ \ \forall y \in Y_j \ \} \end{equation} 
Then $S_j (p)$ is nonempty and convex for every $p\in \mathbb{R}^n_+$ and upper hemicontinuous as a function on  $\mathbb{R}^n_+ -\bar{0}$. In other words if $(p_i )\rightarrow p$ and $(y_i)\rightarrow y$ are convergent sequences with $y_i \in S_j (p_i)$ for all $i$, then $y\in S_j (p)$. (page 170, 298) 
\item For each household $H_i $, there is a set $X_i \subseteq \mathbb{R}^n_+ $  denoting the set of possible consumption plans of agent $i$. Each $X_i$ is closed, convex, and unbounded above. (page 125) 
\item The economy has an initial endowment of resources which is denoted as $r \in \mathbb{R}^n$. Each household has an initial endowment of goods given by $r^i \in \R^n$ so that $\sum r^i = r$.  Let $Y = \sum_j Y_j$ and $y\in Y$. The vector $y$ is said to be attainable if $y +r \geq 0$. This denotes the possible production of the economy. %\textcolor{blue}{check: We don't need this assumption in the general model $->$} 
We require that there is a $c>0$ such that $|y|<c$ for all attainable vectors $y$, the existence of which is guaranteed.  
(page 127, 167)
\item Firm $j$'s  profit function $\pi_j :\R^n \rightarrow \R$ is given by:
\begin{equation}
    \pi_j (p) = \max_{y \in Y_j} (p \cdot y ) = p \cdot S_j(p) \end{equation} 
Each household $i$ has shares in firm $j$'s profit given by $\alpha_{ij}\geq 0$ such that  $\sum_{i \in H}\alpha_{ij} = 1$. (page 142)
\item Each agent is endowed with a convex preference quasi-ordering $\geq_i$ on $X_i$. This may be represented by a continuous real-valued function $u_i$ which is the utility function for each $i$. We assume that there is always universal scarcity i.e. for every $x_i \in X_i$, there exists an  $x'_i \in X_i$ so that agent $i$ values it more than $x_i$, i.e., $x_i' >_i x_i $, or simply, $u_i (x_i')>u_i (x_i )$. We also assume that the sets $A_i(x_0) = \{x | x \in X_i , \ x \geq_i x_0 \}$ and $G_i(x_0) = \{x | x \in X_i , \ x_0 \geq_i x \}$ are closed. (page 125, 126, 130)
    \item Income of agent $i$ is defined as:
    \begin{equation}
    M_i (p) = p \cdot r_i + \sum_{j \in F} \alpha_{ij}\pi_j(p)\end{equation}
The budget set of agent $i$ is defined as:
    \begin{equation}B_i (p) =  \{ x | x_i \in\R^n \mbox{ and } p \cdot x \leq M_i(p) |x| \leq c \} \end{equation}
This set should be convex for every $p \in (\R^n_+ -\bar{0})$. (page 175, 181) %\textcolor{blue}{is this c same as the one defined in 6. ? It can be proved that this c exists, using our  assumptions.}  
\item Each agent has a demand set given by:
\begin{equation} D_i (p) = \{ x_i|x_i \in B_i (p) \cap X_i , \ \ x_i \geq _i x \ \ \forall \ x \in B_i (p) \cap X_i \}\end{equation}
     Moreover, this set is convex. (page 132)
     \item $B_i(p) \cap X_i$ is continuous (lower and upper hemicontinuous), compact valued, and nonempty for all $p$ in   the unit simplex $\Delta $, i.e., $p\geq 0$ and $\sum_i p_i =1$. Also, $D_i(p)$ is upper hemicontinuous, convex, nonempty, and compact for all $p \in \Delta$. (page 175, 301,303)
\item For all $H_i$, 
\begin{equation}M_i(p) > \inf_{x \in X_i \cap \{ x||x|\leq c\}} p \cdot x \ \forall \  p \in \Delta \end{equation} (page 133)
    
\item Finally, a notation: The excess demand correspondence at prices $p$ is defined  as $Z(p)= \sum_{i=1}^m D(p)- \sum_{j=1}^l S_j (p)- r$. Thus, $Z(p)\subseteq \R^n $ records excess of demand over the supply (including initial resources). (page 147)

    \end{enumerate} 
Arrow and Debreu prove (see Ross, Theorem 24.7\cite{starr2011general}, page 308) that under the assumptions and specifications mentioned above, there exists a competitive market equilibrium, i.e., a sequence of production vectors $(y^*_1 ,\ldots, y^*_l )$, with each $y^*_i \in Y_i$, a price vector $p^* =(p^*_1 ,\ldots ,p^*_n )$, and a set of allocations $x^*_1 ,\ldots ,x^*_m $ with each $x^*_i \in X_i $, all of which satisfy the following conditions:  

\begin{enumerate}
    \item[AD1] Each firm maximizes profits, i.e., $y^{*}_j$ belongs to $S_j(p^*)$ for each $j$. 
    \item[AD2] Each household maximizes utility, i.e., $x_i$ maximizes $u_i(x_i)$ over the set $B_i(p^*)$ i.e. $x_i \ \in D_i(p^*)$. 
    \item[AD3] Prices are non-negative, bounded and not all are zero. Without loss of generality, we can assume that $\sum_j p_j = 1$. 
    \item[AD4] (i) $Z(p^*) \leq 0$. Moreover, (ii) if $ Z_k(p^*)<0$ then $p^*_k = 0 $.
\end{enumerate} 
\subsection{SCM model as an A-D instance}
Given the data $T,Y,U$ for a small market, we shall now build a suitable A-D market. Recall that there are $m$ labor classes and $n$ processed goods, and $Y$ is the $m\times 1$ vector recording the available labor in each class. Also, recall that $T$ and $U$ are $m\times n$ matrices recording the technologies available and the utilities of the classes where $T_{ij}$ refers to the number of labor-units of type $i$ required to produce one unit of good $j$, and $u_{ij}$ denotes the utility of good $j$ to class $i$. We now construct the market $AD(T,Y,U)$. 

\begin{itemize}
\item The total number of goods in $AD$ are $n+m$, viz., $\{ g_1 ,\ldots , g_n , r_1 ,\ldots , r_m \}$, where $r_i $ corresponds to the labor of class $i$.  We call labor inputs `raw' goods.

\item The set of firms in $AD$ is $F=\{ f_1 ,\ldots ,f_n \}$, where $n$ is the number of columns of $T$. The firm $f_j $ produces good $g_j$.
  
\item The number of households is $m$ corresponding to the labor classes, and each agent $H_i $ begins with an endowment $Y_i $ of the good $r_i $ above. 

\item The production function of $f_j $ is $PY_j$ (to differentiate it from the total labor available, viz., $Y_i$) and it arises from the column $j$ of $T$. We define $v_j $ as the $(n+m)$ vector $g_j -\sum_k T_{kj} r_k $ to represent that $T_{kj}$ units of labor type $k$ are used to make one unit of good $g_j$ and the production technology set $PY_j$ as $PY_j =\{ \lambda \cdot v_j | \lambda \in [0,L]\}$ where $L$ is a large number. Thus, the firm $f_j $ may produce $\lambda$ units of good $g_j $ after consuming the necessary resources $\lambda \cdot (\sum_k T_{kj} r_k )$. 

\item Recall that household $H_i $ owns a fraction $\alpha_{ij}$ of the firm $f_j $. We specify these numbers arbitrarily with the prescribed conditions that $\alpha_{ij}\geq 0$ for all $i,j$ and that $\sum_i \alpha_{ij}=1$ for all $j$. The exact numbers will be irrelevant since we will see that in equilibrium, the firms make zero profits. 

\item The space $X_i $ of all possible consumption plans is $\R^{n+m}_+$, the set of non-negative vectors denoting the consumption of the finished goods and the primary goods.  

\item The utility matrix serves to define the continuous real-valued utility function $u_i$ for each agent $i$.  If $x_{ij}\geq 0$ is the amount of good $j$ allocated to agent $i$, then $u_i =\sum_j u_{ij} (x_{ij})$. Utilities are zero for labor unit hours, i.e., $u_{ij}=0$ for $j>n$, as it is only the firms that have any use for labor. Since the utilities are piece-wise linear, it is clear  that the principle of non-satiation holds if, for every $i$, there is a $j$ such that the derivative $u'_{ij}$ is positive, a property which we assume. Also, since $u_{ij}=0$ for $j>n$, i.e., the agents have no use for labor, we may restrict $X_i $ to those elements $x$ with $x_j =0$ for $j>n$, i.e., effectively $X_i =\R^n_+$.   
    
\end{itemize}

This completes the specification of the A-D market $AD$. We verify that the specification satisfies conditions 1-13 of \ref{subsec:AD}. Condition 1 is obvious. Given that $PY_j=\{ \lambda v_j |\lambda \in [0,L]\}$ is a line segment, we may check that conditions 2-3 are satisfied. Condition 4 requires us to compute $S_j (p)$, the optimal supply for a firm $j$, given a price vector $p$. We see that $S_j (p)$ is as given below:
\begin{equation}
 S_j (p)=\left\{ 
\begin{array}{rl}
0 & \mbox{  if $p\cdot v_j<0$} \\
PY_j & \mbox{  if $p\cdot v_j =0$} \\
L\cdot v_j & \mbox{   if $n\cdot v_j >0$} \\
\end{array} \right.
\end{equation}
It is easily seen that $S_j $ is upper hemicontinuous. 

Condition 5 is straightforward. As regards Condition 6, it is easy to find a bound $c\in R$ for the overall production in our economy: one such bound is $m\cdot n$ times the product of the largest labor class and the largest entry in the matrix $T$. Conditions 7-9 are easily satisfied by our piece-wise linear concave utility functions. Condition 10 requires us to compute the optimal bundle $D_i (p)$ for an agent $i$ and argue for the convexity of this set. For an splc utility function, it is easily seen that this is a convex polyhedral set. It is easily seen that this is upper hemicontinuous. Conditions 12-13 are trivial. Thus, our market model $AD(T,Y,U)$ satisfies the A-D conditions of Section \ref{subsec:AD}. 

\subsection{The equivalence of equilibria}

We next show an equivalence between the equilibria of the market $SCM(T,Y,U)$ and the A-D equilibria of $AD(T,Y,U)$. To begin with, we explain the nature of equilibria for $AD(T,Y,U)$ as assured by the A-D result, and set it in the context of our notation for the SCM model. The A-D equilibria is given by the $(n+m)$-vector $P^*$ of prices , the $(n+m)$-vector of productions $(Y^*_j )$ and the $m\times (n+m)$-matrix of allocations $(X^*_{ij})$. 

The price vector $P^*$ specifies prices for all the 
$(n+m)$ goods, i.e., for labor as well as finished goods. Whence, we may denote $P^* =(p^*, w^*)$, where $p^*$ is the $n$-vector recording the prices of goods, and $w^*$ is the $m$-vector recording the wages of the $m$ labor classes. Similarly, the vector $Y^*_j $ must be the multiple $q^*_j v_j $. Finally, the allocations $X^*_{ij}$ for $j>n$ may be assumed to be zero since they do not alter the utility of any household and only (possibly) reduce its budget. Thus, we may assume $X^*$ to be specified instead by $x^*$, an $m\times n$ matrix. Thus, an A-D equilibrium is effectively specified by the tuple $(q^*,w^*,p^*, x^*)$, i.e., in the same format as the SM equilibrium.
\begin{prop} \label{prop:SMAD}
An SM equilibrium $(q^*,w^*,p^*, x^*)$ is also an AD-equilibrium. 
\end{prop}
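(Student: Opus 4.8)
The plan is to read the SM equilibrium $(q^*, w^*, p^*, x^*)$ as the $(n+m)$-dimensional A-D data $P^* = (p^*, w^*)$, $Y^*_j = q^*_j v_j$, and $X^*$ (with zero entries on the labor goods), and then verify conditions AD1--AD4 one at a time. The single fact that drives everything is that firms earn zero profit at the equilibrium: the profit of $f_j$ on the plan $\lambda v_j$ is $\lambda\,(p^*_j - (w^*T)_j)$, which is $\leq 0$ by dual feasibility $w^*T \geq p^{*T}$ of $DP_{\mathcal P}$, and is exactly $0$ whenever $q^*_j > 0$ by complementary slackness in $CP(T,Y;p^*)$. I would establish this at the outset, because it collapses the A-D income $M_i(P^*) = P^*\cdot r^i + \sum_j \alpha_{ij}\pi_j(P^*)$ to $w^*_i Y_i$, which is exactly the Fisher budget $W_i$, independent of the (irrelevant) shares $\alpha_{ij}$.

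For AD1 the zero-profit computation already gives $\pi_j(P^*) = 0$ and pins down $S_j(P^*)$: when $p^*_j = (w^*T)_j$ we have $S_j(P^*) = PY_j$ and the plan $q^*_j v_j$ lies in it; when $p^*_j < (w^*T)_j$ we have $S_j(P^*) = \{0\}$, but then complementary slackness forces $q^*_j = 0$, so $Y^*_j = 0 \in S_j(P^*)$. For AD2 the A-D demand problem for $H_i$ is the maximization of the linear utility $\sum_j u_{ij} x_j$ over $\{x \geq 0 : p^* \cdot x \leq M_i(P^*)\}$; since $M_i(P^*) = W_i$ and labor goods carry zero utility, an optimal bundle spends the whole budget on goods of maximal bang-per-buck $u_{ij}/p^*_j$ and puts nothing on labor goods. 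This is verbatim the Fisher ``optimal goods'' condition met by $x^*_i$, so $x^*_i \in D_i(P^*)$.

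Condition AD3 follows from nonnegativity of $p^*, w^*$ in the two correspondences and from the Fisher requirement $p^*_j q^*_j > 0$ for some $j$, which keeps $P^* \neq 0$; the normalization $\sum_k P^*_k = 1$ is then obtained by rescaling, after I check that an SM equilibrium is invariant under simultaneous scaling of $(p^*, w^*)$ (the primal optimizer $q^*$ and the allocation $X^*$ are unchanged, while budgets and the objective scale uniformly). For AD4 I would compute the excess demand coordinatewise. On a finished good $g_j$ supply comes only from $f_j$ and equals $q^*_j$, endowments are zero, so $Z_{g_j}(P^*) = \sum_i x^*_{ij} - q^*_j = 0$ by Fisher market clearing. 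On a labor good $r_i$ demand is zero, total firm usage is $\sum_j q^*_j T_{ij} = (Tq^*)_i$, and the endowment is $Y_i$, giving $Z_{r_i}(P^*) = (Tq^*)_i - Y_i \leq 0$ by primal feasibility $Tq^* \leq Y$; moreover $Z_{r_i}(P^*) < 0$ means the $i$-th labor constraint is slack, whence complementary slackness yields $w^*_i = P^*_{r_i} = 0$, which is exactly the second half of AD4.

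The only genuinely delicate point is AD2: I must argue that restricting the A-D consumption set to finished goods and exhausting the budget is without loss of optimality, and that the resulting optimizer coincides with the Fisher allocation rather than merely sharing its utility value. I expect to handle this by noting that zero-utility labor goods can be dropped from any optimal bundle without changing utility, and that under linear utilities the full-budget, maximal-bang-per-buck condition characterizes $D_i(P^*)$ exactly; the edge case $w^*_i = 0$ (a free but useless labor good) is absorbed by the convention $X^*_{ij} = 0$ for $j > n$ used in passing from $X^*$ to $x^*$.
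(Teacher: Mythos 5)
Your proposal is correct and follows essentially the same route as the paper: verify AD1--AD4 in turn, using dual feasibility and complementary slackness of $CP(T,Y;p^*)$ for the firm side and the Fisher market-clearing and bang-per-buck conditions for the household side. The only differences are matters of explicitness --- you front-load the zero-profit observation to identify $M_i(P^*)$ with the Fisher budget $w^*_i Y_i$, compute $S_j(P^*)$ case by case, and justify the AD3 normalization by a scaling argument, all of which the paper treats more tersely but identically in substance.
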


\noindent
 \begin{proof} Let us prove AD1, i.e., each firm maximizes profits. For this, consider a firm $j$ and the variable $q_j $. By the fact that $w^*$ satisfies $DP_{\mathcal P}$, we know that $w^*T \geq p^*$, i.e., for a column $j$, either the firm is unprofitable or its profit = $p_iq_i - w_iT_iq_i$ is zero. Thus there is nothing to prove when $q^*_j =0$.  On the other hand, when $q^*_j >0$, we know that $(w^* T)_j =p^*_j $ and that the profit is zero, irrespective of $q^*_j$. Thus the firm $j$ maximizes profits. 

Next, let us come to AD2, i.e., each household $H_i $ maximizes its utility given its budget as the only constraint. The budget of the household $j$ is precisely $w^*_j * Y_j $, and this is what the household is endowed with when it enters the Fisher market.  The market clearing condition for the buyer ensures that all money is spent. The maximum bang-per-buck (optimality) condition ensures that, given the prices, only the best possible goods are bought. Thus, while the allocation $x^*_{ij}$ respects conservation constraints, there is no allocation that improves the utility for household $j$, given its budget constraint.

AD3 is obviously ensured. 

For AD4, first, consider AD4(i), i.e., that the supply exceeds demand and the price of a good is non-zero only when supply exactly meets demand. For this, consider first the labor goods $r_i $'s. While the supply is $Y_i $, the condition $Tq^* \leq Y$ ensures that supply exceeds demand. Now, if there is any excess supply of labor class $r_i $, then the dual $DP_{\mathcal P}$ ensures that the wage $w^*_i =0$. For the produced goods $g_j $, the market clearance for the seller ensures that the whole amount $q^*_j $ is allocated, i.e., $\sum_i x^*_{ij}=q^*_j $.  Thus the market clears for every good. Note that  $p^*_j >0$ always holds for an SM equilibrium. This concludes the proof. \end{proof}

\begin{prop}\label{prop: ADSM}
An AD equilibrium $(q^*,w^*,p^*, x^*)$ is also an SM equilibrium. 
\end{prop}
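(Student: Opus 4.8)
The plan is to reverse the correspondence of Proposition \ref{prop:SMAD}: starting from an A-D equilibrium $(q^*,w^*,p^*,x^*)$ of $AD(T,Y,U)$, I would verify the two defining SCM conditions, namely $(q^*,w^*)\in CP(T,Y;p^*)$ and $(p^*,x^*)\in CC(U;q^*,w^*)$, by translating each of AD1--AD4 back into the LP and Fisher-market language. The computations that drive everything are the same ones used in the forward direction: writing $P^*=(p^*,w^*)$, the firm data gives $P^*\cdot v_j = p^*_j-(w^*T)_j$ and profit $\pi_j(p^*)=q^*_j\,(P^*\cdot v_j)$; the labor-only endowments give household income $M_i(p^*)=p^*\cdot r_i+\sum_j\alpha_{ij}\pi_j(p^*)$; and the excess-demand vector splits into a produced-good part $Z_j=\sum_i x^*_{ij}-q^*_j$ (since $r_j=0$ for $j\le n$) and a labor part $Z_{n+k}=(Tq^*)_k-Y_k$.

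For the production side I would first establish primal/dual feasibility. Applying AD4(i) to the labor goods gives $Z_{n+k}\le 0$, i.e.\ $Tq^*\le Y$, so $q^*$ is feasible for $P_{\mathcal P}$. Dual feasibility $(w^*T)_j\ge p^*_j$ for all $j$ is equivalent to $P^*\cdot v_j\le 0$; I would obtain this from AD1 together with the bound $L$: if $P^*\cdot v_j>0$ then $S_j(p^*)=\{L v_j\}$ forces $q^*_j=L$, which for $L$ large violates $Tq^*\le Y$, a contradiction. Non-negativity of $w^*$ follows from AD3. I would then read off complementary slackness directly: when $q^*_j>0$, the shape of $S_j(p^*)$ rules out $P^*\cdot v_j\ne 0$, so $(w^*T)_j=p^*_j$; and when $w^*_k>0$, AD4(ii) forces $Z_{n+k}=0$, i.e.\ $(Tq^*)_k=Y_k$. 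Feasibility plus complementary slackness yields, by LP duality, that $q^*$ and $w^*$ are \emph{simultaneous} optima of $P_{\mathcal P}$ and $DP_{\mathcal P}$, hence $(q^*,w^*)\in CP(T,Y;p^*)$.

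For the consumption side I would first note that complementary slackness makes every firm's profit $\pi_j(p^*)=q^*_j\,(P^*\cdot v_j)=0$, so each household's income collapses to $M_i(p^*)=w^*_i Y_i$ --- precisely the money it carries into the Fisher market. AD2 ($x^*_i\in D_i(p^*)$) combined with universal scarcity (conditions~7 and~11) then forces the full budget to be spent, $\sum_{j\le n} p^*_j x^*_{ij}=w^*_i Y_i$, giving the buyer market-clearing / money-exhaustion condition; and for linear utilities the utility-maximizing bundle is exactly the set of maximum $u_{ij}/p^*_j$ goods, which is the optimal-goods ($bb_i$) condition. Finally AD4(i) on the produced goods gives $\sum_i x^*_{ij}\le q^*_j$, which I would upgrade to equality (seller clearing) precisely on the goods with $p^*_j>0$ via AD4(ii).

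The main obstacle I anticipate is the exact seller-side clearing for every produced good, which needs $p^*_j>0$ whenever $q^*_j>0$ in order to turn $Z_j\le 0$ into equality. I would argue this from complementary slackness, $p^*_j=(w^*T)_j$, together with the standing assumption that each good is valued by some buyer and the non-satiation property: a produced good priced at $0$ but desired by some class would attract unbounded demand and could not coexist with the equilibrium conditions, so produced goods must carry strictly positive prices. Some care is also required so that the artificial bounds $L$ (on production) and $c$ (on the budget set) do not bind at equilibrium; the general-position hypotheses on $T,Y,U$ and the choice of $L,c$ large should dispatch these degenerate cases, completing the identification of the A-D equilibrium with an SM equilibrium.
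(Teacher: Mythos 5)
Your proposal is correct and follows essentially the same route as the paper's own proof: primal feasibility of $q^*$ from AD4(i) on the labor goods, dual feasibility of $w^*$ from AD1 together with the large bound $L$, complementary slackness from AD4(ii) and the structure of $S_j(p^*)$, zero profits collapsing household income to $w^*_iY_i$, and the same resolution of the seller-clearing subtlety (a produced good priced at zero but valued by a buyer with money would contradict optimality of the allocation). The only differences are presentational — you make the appeal to LP duality for simultaneous optimality and the non-binding of the bounds $L$ and $c$ explicit, which the paper leaves implicit.
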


\noindent
 \begin{proof} Let us first examine the condition AD4(i) that $Z(p^*)\leq 0$. For a labor good $r_i $, this precisely states that $Tq^* \leq Y$.  Note that we have chosen a large $L$ to construct $Y_j $, hence we must have $0\leq q^*_j <L$. On the production side, AD4(ii) ensures that $w^*_i (Y_i -(Tq^*)_i)=0$, i.e., $w^*$ exhibits complementary slackness with respect to the conditions $Tq^* \leq Y$. 

Now let us examine AD1 for a firm $f_j $. Since $Y_j =\{ \lambda v_j | \lambda \in [0,L)\}$, the profit of the firm is given by
$q^*_j (p_j -(w^* T)_j)$. Thus if $p_j > (w^* T)_j $ then the optimal value would have been $q^*_j =L$. Since we know that $q^*_j <L$, we have that $(w^* T)\geq p_j $, i.e., the variables $w^*$ satisfy the dual program $w^* T\geq p$ and thus the tuple $(q^* ,w^*)$ indeed satisfy the optimal production condition for SM equilibrium.

Next, let us evaluate $Z(p^*)$ for a good $g_j $. If indeed $Z(p^*)_j <0$, then we are assured $p^*_j =0$. But $Z(p^*)_j <0$ indicates an excess of supply of good $g_j $. Assuming that there is an $i$ such that $u_{ij}>0$ and $w_i >0$ tells us that the allocation $(x_{ij})_j $ is not an optimal allocation since, some extra good $g_j $ is available for free and which is of value to buyer $i$. This tells us that $Z(p^*)_j =0$ for all useful goods and that there is market clearance from the seller side. By the optimality of the consumption plan, i.e., AD2, given the budget, we know that all money is spent, i.e., there is market clearance on the buyer side as well. 

Finally, given the prices $p^*$, for an agent $i$ with $w^*_i >0$, define $b_i =\max_j (u_{ij}/p_j)$, i.e., the bang-per-buck for labor class $i$. Note that if $b_i =\infty$, then the optimal allocation for this class would have been infinite too. Since the AD equilibrium exists and is finite, we must have $p_j >0$ for all useful goods $g_jj$ and $b_i < \infty$. Now, from the optimality constraint AD2, it is clear that whenever
$p_j x_{ij}>0$, we must have $b_i = u_{ij}/p_j$ and that if $x_{ij}=0$, we must have $b_i > u_{ij}/p_j $. This proves the optimality condition of $(p^* , x^*)$ and that $(p^* ,x^*)$ satisfy the optimal consumption requirement of SM. 

This concludes the proof. 
\end{proof}
     
\begin{theorem} \label{thm: existence}
There is an equivalence between the equilibria of the SCM market parametrized by $(T,Y,U)$ and the corresponding A-D market $AD(T,Y,U)$. As a result, for any SCM market given by the above data, there exists an SM equilibrium.
\end{theorem}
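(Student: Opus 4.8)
The plan is to assemble the theorem from the two directional correspondences already established, together with the A-D existence result whose hypotheses were verified in the previous subsection. The equivalence of equilibria is precisely the conjunction of Proposition~\ref{prop:SMAD} and Proposition~\ref{prop: ADSM}: the former shows that every SM equilibrium $(q^*,w^*,p^*,x^*)$, read in the $(n+m)$-good format described above, satisfies AD1--AD4, while the latter shows the converse, that every A-D equilibrium of $AD(T,Y,U)$, after discarding the redundant labor allocations $x^*_{ij}$ for $j>n$, satisfies the production correspondence $CP(T,Y;p^*)$ and the consumption correspondence $CC(U;q^*,w^*)$. Since both maps act as the identity on the shared coordinate tuple $(q^*,w^*,p^*,x^*)$, they are mutually inverse, so I would conclude that they constitute a bijection between the two equilibrium sets, which is the claimed equivalence.

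For the existence half, I would simply invoke the verification already carried out: the constructed market $AD(T,Y,U)$ meets conditions 1--13 of Section~\ref{subsec:AD}, with the supply correspondences $S_j$ upper hemicontinuous, the consumption sets $X_i=\R^n_+$ closed and convex, the utilities continuous and non-satiated, and aggregate production bounded by the explicit constant $c$. Consequently the Arrow-Debreu theorem (Ross, Theorem~24.7 in \cite{starr2011general}) applies and guarantees an A-D equilibrium for $AD(T,Y,U)$. Applying Proposition~\ref{prop: ADSM} to this equilibrium then yields an SM equilibrium for the original SCM market, which establishes the existence assertion. Notably, no fresh fixed-point argument is required here, since existence is inherited wholesale from the cited A-D result rather than reproved.

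The only genuine subtlety, and the point I would check most carefully, is the dimensional translation between the two formats: an A-D equilibrium lives in $\R^{n+m}$ (produced goods plus labor), whereas an SM equilibrium is indexed only by the $n$ produced goods. I would make explicit that setting $X^*_{ij}=0$ for $j>n$ costs no generality---labor carries zero utility for households and only depletes budget, so no optimal bundle ever purchases it---and that identifying the labor-price coordinates of $P^*$ with the wage vector $w^*$ is exactly the reinterpretation underlying both propositions. Once this bookkeeping is pinned down, the forward and backward passes fit together with no further work, and the theorem follows immediately as a corollary of the two propositions and the A-D existence result.
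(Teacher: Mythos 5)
Your proposal matches the paper's proof exactly: the equivalence is obtained by combining Proposition~\ref{prop:SMAD} and Proposition~\ref{prop: ADSM}, and existence follows by invoking Theorem~24.7 of \cite{starr2011general} for the constructed market $AD(T,Y,U)$ and pulling the resulting A-D equilibrium back via Proposition~\ref{prop: ADSM}. The extra care you take with the $(n+m)$-to-$n$ coordinate bookkeeping is exactly the reinterpretation the paper sets up just before stating the two propositions, so nothing further is needed.
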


\noindent
 \begin{proof} The equivalence follows from Proposition \ref{prop:SMAD}, and Proposition \ref{prop: ADSM}. The existence of an AD equilibrium is guaranteed by Theorem 24.7 of \citep{starr2011general}. \end{proof}

\section{The Structure of SM Equilibria and the Consumer Choice Game} \label{sec: ccg}
 In this section,  we analyze the combinatorial structure of SM equilibria and its connection with market parameters, where we show that these structures are local invariants. This understanding is then used to define the \emph{Consumer Choice Game}, i.e., CCG. 

\subsection{Generic equilibrium and combinatorial data}

For an equilibrium point $\eta $, we define $I(\eta )=\{ i\in [m] | w_i >0 \}$. i.e., the labor classes whose wages are positive. Similarly, we define $J(\eta )=\{ j \in [n]| q_j >0\}$, i.e., the goods which have non-zero production. Similarly, we define $F(\eta )$ as $\{ (i,j)|X_{ij}>0 \}$, i.e., the indices $(i,j)$ such that labor class $i$ has a positive consumption of good $j$.  %Note that $F(\eta)$ gives rise to a Fisher market forest, connecting consumers to the goods they buy.

We next associate $I(\eta ), J(\eta )$ and $F(\eta )$ as the combinatorial data with equilibrium point $\eta =(p,q,w,X)$ for the parameters $(T,Y,U)$ of the economy. The combinatorial data identify key features of the equilibrium, e.g., the labor classes with non-zero wages, the goods produced, and the Fisher forest, i.e., the price-determining consumption of goods.  We define the notion of `generic-ness', which allows us to construct the equilibrium from its combinatorial data, and to extend such equilibria at a point to its vicinity. 
\begin{definition}
We say that $\eta $ is a generic equilibrium if (i) for $j\not \in J(\eta )$, we have $(wT)_j >p_j $, and (ii) for $(i,j) \not \in F(\eta )$, we have  $u_{ij}/p_j < \max_k u_{ik}/p_k $.
\end{definition}

Let us now fix $T,Y$ and vary $U$ over $\mathcal{U}=\R^{m\times n}$. Given a $U\in \mathcal{U}$, and an equilibrium point $\eta $ with the economy specifications $T,Y,U$, we say that $\eta $ sits over $U$, since it is for this element of $\mathcal{U}$, that $\eta $ was observed. Theorem \ref{thm: T1} relates to the existence of generic equilibria.

\begin{restatable}{theorem}{thmone} 
 \label{thm: T1}
 Let $T,Y,U$ be matrices in general position, i.e., there be no algebraic relationship between the entries, with rational coefficients.   Given an equilibrium $\eta $ over $U$, there are arbitrarily close $U'$ and equilibria $\zeta$ sitting over $U'$ which are generic. Moreover, if $\eta $ has $m$ wage-earning labour classes, i.e., $|I(\eta )|=m$ and $n$ goods produced, i.e., $|J(\eta )|=n$, then the number of connected components ($k$) of the solution Fisher forest $F(\eta )$ is at least $n-m+1$. 
\end{restatable}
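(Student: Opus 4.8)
The plan is to prove the two assertions separately: the density of generic equilibria by an explicit perturbation that only touches inactive constraints, and the lower bound on the number of Fisher components by a transversality argument between a production subspace and a consumption subspace. Throughout write $y_i := 1/bb_i$, so that the optimal-goods condition reads $u_{ij}/p_j \le 1/y_i$ with equality whenever $x_{ij}>0$; condition (ii) of genericity is read for the wage-earning buyers, for whom $bb_i$ is defined.

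\textbf{Nearby generic equilibria.} A failure of genericity is always a tie on an \emph{inactive} constraint, and such constraints carry slack or price-freedom that can be pushed into strictness without disturbing the rest of $\eta=(p,q,w,X)$. There are two kinds. First, a produced good $j\in J(\eta)$ with $x_{ij}=0$ but $u_{ij}/p_j=bb_i$ for some $i\in I(\eta)$: I decrease $u_{ij}$ slightly. Since $i$ already buys some $j'$ with $u_{ij'}/p_{j'}=bb_i$, the value $bb_i$ and the whole allocation and price vector remain optimal, so $\eta$ is still an equilibrium over the perturbed $U'$, now with $u_{ij}/p_j<bb_i$. Second, an unproduced good $j\notin J(\eta)$: because $q_j=0$, consumption only requires $p_j\ge\max_i u_{ij}y_i$ and production only requires $p_j\le (wT)_j$. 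If these coincide I first scale the column $(u_{ij})_i$ by $1-\delta$, which drops $\max_i u_{ij}y_i$ strictly below $(wT)_j$ (note each $y_i$ is pinned by the \emph{produced} goods $i$ consumes, hence unchanged), and then reset $p_j$ to any value in the open interval $(\max_i u_{ij}y_i,(wT)_j)$. This makes $(wT)_j>p_j$ and $u_{ij}/p_j<bb_i$ for all $i$, and leaves every market-clearing, budget, and profitability condition intact since $j$ is neither produced nor bought. Doing this for every tie yields a $U'$ arbitrarily close to $U$ and a generic equilibrium $\zeta$ equal to $\eta$ except for the freely reset prices of unproduced goods.

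\textbf{The Fisher forest and the consumption subspace.} Now assume $|I(\eta)|=m$ and $|J(\eta)|=n$. First, $F(\eta)$ is a forest: an edge $(i,j)\in F(\eta)$ forces $p_j=u_{ij}y_i$, so a cycle would produce a nontrivial monomial relation $\prod u_{ij}=\prod u_{i'j'}$ among entries of $U$, contradicting general position. Since every good is sold and every wage-earning buyer spends, there are no isolated active vertices, so $F(\eta)$ is a forest on the $m+n$ active vertices with, say, $k$ components, hence $m+n-k$ edges. Let $B_1,\dots,B_k\subseteq[n]$ be the goods in the respective components; they partition $[n]$ and each is nonempty. Propagating $p_j=u_{ij}y_i$ along the tree of each component shows that all prices in $B_a$ are fixed monomial multiples (in the entries of $U$) of one reference price, so $p=\sum_a t_a v_a$ with $t_a>0$, where $v_a$ is a fixed positive vector supported on $B_a$. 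The $v_a$ have disjoint supports, hence are independent, so $p$ lies in the $k$-dimensional subspace $W=\mathrm{span}\{v_1,\dots,v_k\}$, which depends only on $U$ and the combinatorics of $F(\eta)$.

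\textbf{Intersecting with the production subspace.} On the production side, $|I(\eta)|=m$ gives $w>0$ and $|J(\eta)|=n$ gives $q>0$, so complementary slackness forces $wT=p$; thus $p$ lies in $R=\{wT : w\in\R^m\}$, the row space of $T$, of dimension $m$ (we may assume $m\le n$, since otherwise $k\ge 1\ge n-m+1$). So the equilibrium price is a nonzero vector of $R\cap W$. Here $R$ is determined by $T$ and $W$ by $U$, and $T,U$ are algebraically independent, so $R$ is a generic $m$-subspace relative to the fixed $k$-subspace $W$. A generic $m$-subspace meets a fixed $k$-subspace of $\R^n$ only in $\{0\}$ whenever $m+k\le n$ (expected intersection dimension $\max(0,m+k-n)$). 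Since $R\cap W$ contains the nonzero vector $p$, we must have $m+k>n$, i.e. $k\ge n-m+1$, as claimed.

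\textbf{Main obstacle.} The delicate step is the last: upgrading ``no algebraic relation among the entries of $T,Y,U$'' to the transversality of $R=\mathrm{rowspace}(T)$ and the monomially generated $W$. I would make it precise by noting that $R\cap W\neq\{0\}$ with $m+k\le n$ is the simultaneous vanishing of all maximal minors of the $(m+k)\times n$ matrix stacking bases of $R$ and $W$; each minor is a polynomial in the entries of $T$ and in monomials of $U$, so it suffices that one such minor be not identically zero (equivalently, for fixed $W$ the bad locus of $R$ is a proper Schubert subvariety of the Grassmannian that generic $T$ avoids). The remaining care is the routine but exhaustive verification in the first part that each column rescaling and each price reset preserves every defining condition of an SM equilibrium.
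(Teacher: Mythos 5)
Your proof is correct and follows essentially the same route as the paper's: prices within each Fisher component are monomial multiples of a single reference price, so stacking these $k$ relations with $wT=p$ imposes a rank deficiency on an $(m+k)\times n$ matrix whose entries are coefficients of $T$ and monomials in $U$, which would contradict general position if $m+k\le n$; the paper packages this as a vanishing determinant where you package it as transversality of the row space of $T$ with $\mathrm{span}\{v_1,\dots,v_k\}$, and both versions leave the same routine check that some maximal minor is not identically zero. Your explicit perturbation scheme for producing nearby generic equilibria (decreasing $u_{ij}$ on inactive edges of produced goods, and rescaling the column and resetting the price for unproduced goods) is a more careful, constructive rendering of the paper's terse ``relax an equation to reach a generic point'' step.
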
 
 \begin{proof}
See Appendix \ref{app: T1proof}.
 \end{proof}
 It is an important question if the data $(I,J,F)$ do indeed determine $\eta $, the equilibrium. This is summarized in the Theorem \ref{thm: genericeq}.
     
 \begin{restatable}{theorem}{thmtwo}
 \label{thm: genericeq}
 Again, let $T,Y,U$ be in general position and $\eta $ be a generic equilibrium over $U$ with the combinatorial data $(I,J,F)$, then the parameters of $\eta $, viz., $p,w,q$ are solutions of a fixed set of algebraic equations in the coefficients of $U$.
 For an open set of the parameter space of $\mathcal{U}$, the equilibria, as guaranteed by Theorem \ref{thm: existence}, are generic and have the same combinatorial data as $\eta$. 
\end{restatable}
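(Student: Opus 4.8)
The plan is to convert the data $(I,J,F)$ into a system of equalities, prove that this system pins $\eta$ down locally by a Jacobian/rank argument exploiting the forest structure of $F$ together with the general position of $T,Y,U$, and then upgrade local uniqueness to the open-set statement using the openness of the strict genericity inequalities.

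\textbf{The equality system.} Writing $y_i=1/bb_i$, I would activate exactly the constraints prescribed by the data: $w_i=0$ for $i\notin I$ and $q_j=0$ for $j\notin J$; the tight production relations $(Tq)_i=Y_i$ for $i\in I$ and $(wT)_j=p_j$ for $j\in J$; the conditions $x_{ij}=0$ for $(i,j)\notin F$ together with the bang-per-buck equalities $y_iu_{ij}=p_j$ for $(i,j)\in F$; and the two market-clearing families $\sum_i x_{ij}=q_j$ and $\sum_j p_jx_{ij}=w_iY_i$. Since zero-budget agents consume nothing and unproduced goods are unallocated, $F\subseteq I\times J$, so the genuinely occurring unknowns are $p_J,q_J,w_I,y_I,x_F$. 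These satisfy a polynomial (indeed multilinear) system whose coefficients are entries of $T,Y,U$ and whose combinatorial shape depends only on $(I,J,F)$, and $\eta$ is one of its solutions. This already yields the first assertion. (Prices of unproduced goods $j\notin J$ are not determined uniquely, only bounded below by $\max_i y_iu_{ij}$; these slack coordinates are irrelevant to what follows.)

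\textbf{Local uniqueness (the crux).} I would show the system has an isolated solution at $\eta$. On each of the $k$ connected components of the forest $F$ (acyclicity is guaranteed by Theorem \ref{thm: T1}), the equalities $y_iu_{ij}=p_j$ propagate through the tree, so all $p_j$ and $y_i$ in a component are fixed multiples -- products of ratios $u_{ij}/u_{i'j}$ -- of one positive scale $s_c$. Crucially, the production relations $(wT)_j=p_j$ couple goods lying in different Fisher components through the full matrix $T$, so the only surviving scaling freedom is the single global rescaling $(p,w,y)\mapsto\lambda(p,w,y)$, which the normalization $\sum_j p_jq_j=1$ removes. The remaining equations -- $T^{\top}_{I,J}w_I=p_J(s)$, $T_{I,J}q_J=Y_I$, and the per-component money balances $\sum_{i\in A_c}w_iY_i=\sum_{j\in B_c}p_jq_j$ (where $A_c,B_c$ are the agent and good vertices of component $c$, and one balance is redundant by global money conservation) -- form, after quotienting by the global scale, a square linear system in $(s_1,\dots,s_k,w_I,q_J)$. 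A dimension count shows it is exactly square, and I would argue that general position of $T,Y,U$ forces its determinant to be a nonzero polynomial in the entries, hence nonzero at $\eta$; this rules out accidental rank collapse and makes the solution isolated.

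\textbf{The open set.} With a nonsingular Jacobian, the implicit function theorem produces a unique solution branch $U'\mapsto(p,w,q,x,y)(U')$, algebraic in the entries of $U'$, defined near $U$ and equal to $\eta$ at $U$. To see the branch is a generic equilibrium for nearby $U'$: its equalities hold by construction, while the remaining equilibrium inequalities ($Tq\le Y$, $wT\ge p$, $y_iu_{ij}\le p_j$, non-negativity) and the genericity inequalities ($w_i>0$ on $I$, $q_j>0$ on $J$, $x_{ij}>0$ on $F$, $(wT)_j>p_j$ off $J$, and $u_{ij}/p_j<\max_k u_{ik}/p_k$ off $F$) are all \emph{strict} at $\eta$. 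Strictness is an open condition on continuous functions of $U'$, so all of them survive on a smaller neighborhood. There the branch satisfies every SM condition, hence by the equivalence of Theorem \ref{thm: existence} it is \emph{the} equilibrium over $U'$, and it carries precisely the data $(I,J,F)$, giving the second assertion. The hard part is the local-uniqueness step: the forest handles the consumption block cleanly, but the production block is neither square nor block-diagonal in the Fisher components, so the real content is verifying that coupling the two through $T$ does not drop the rank -- that is, identifying the combined determinant as a nonvanishing polynomial rather than merely invoking genericity.
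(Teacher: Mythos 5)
Your proposal follows essentially the same route as the paper's proof: both activate the tight constraints dictated by $(I,J,F)$, use the forest to reduce the prices in each of the $k$ components to a single scale (the paper's $k'$ free prices), and append the $k-1$ independent money-conservation equations to $T_{I,J}q_J=Y_I$ to obtain a square algebraic system with coefficients in the entries of $U$ --- note it is quadratic rather than linear, since the balances contain $p_jq_j$ terms, a point the paper makes and your write-up briefly misstates --- whose finitely many solutions vary smoothly off a closed algebraic set. The only substantive difference is that you explicitly flag the nonvanishing of the Jacobian as the step requiring verification, whereas the paper disposes of it by invoking standard algebraic-geometry results under general position; your handling of the open-set claim via strictness of the genericity inequalities matches the paper's discussion around Theorem \ref{T3}.
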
 
 \begin{proof}
See Appendix \ref{app: genericeqproof}.
 \end{proof} 

\subsection{The Consumer Choice Game} \label{sec: stratsinccg}
%\subsection{The Consumer Choice Game}
 We now define the consumer choice game $CCG (T,Y)$,  parametrized by the technology matrix $T$ and the labor inventory $Y$, which are henceforth assumed to be fixed. The players are the labor classes, i.e., ${\mathcal L}=\{ L_1 ,\ldots ,L_k \}$. 
 The strategy space $S_i $ for player $i$ is the utility "row" vector $(u_{i*}) \in \R^n$. These rows together constitute the matrix $U$. This strategy space is denoted by ${\mathcal U}$. We also assume that there is a "real" utility matrix $U^t $ which is used to measure outcomes. The motivation is to advance the implications of the \emph{Fisher market game}, \cite{adsul2010nash}, by defining strategies in the SCM model.
 
 Given a play $U$, the outcome is given by an $\eta (U)=(q,w,y,p,X)$, an equilibrium over $U$ obtained in the SCM model. The payoffs, $b_i (X)= \sum_j (U^t )_{ij}x_{ij}$, i.e., the equilibrium allocations  evaluated by each player on their true utilities, define the preference relations for each player. 
 
 Let us now construct the pay-off functions in the vicinity of a generic equilibrium point $\eta (U)$ with the combinatorial data $(I,J,F)$. We first see that  there is an open set $O_{I,J,F} \subseteq {\mathcal U}$ containing $U$ which has the same combinatorial data $(I,J,F)$.  The exact inequalities defining $O_{I,J,F}$ arise from the requirement that the Fisher forest $F$ have non-negative flows in all edges of $F$, that the edge $(i,j)\not \in F$ has an inferior bang-per-buck, and that $(wT)_j -p_j >0$ for $j\not \in J$. As an example, consider an edge $(i,j)\in F$, and the requirement that the flow in this edge be positive. Now, the flow in this edge is a suitable linear combination of the wages $w$'s, prices $p$'s, and quantities $q$'s. As we have argued before, these, in turn, are smooth functions of the entries of $U$. Thus the condition that flow in the edge $(i,j)$ be positive is the requirement that $f(U)>0$ for a suitable smooth function $f$ on $U$. 
 
 Thus, there is indeed such an open set $O_{I,J,F}$, and the pay-off functions are solutions of algebraic equations in the entries of $U$, the coefficients of which depend on the combinatorial data $(I,J,F)$.   This gives us Theorem \ref{T3} below.
 \begin{theorem}
 \label{T3}
For a generic equilibrium point $\eta (U)$ with the combinatorial data $(I,J,F)$, there is an open set $O_{I,J,F} \subseteq {\mathcal U}$ containing $U$ and a smooth family $\eta'(U')$ of equilibria for each $U'\in O_{I,J,F}$ such that (i) $\eta '(U)=\eta (U)$ and (ii) the combinatorial data for $\eta'(U')$ is $(I,J,F)$.
 \end{theorem}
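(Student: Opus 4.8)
The plan is to split the equilibrium conditions into an \emph{equality system} and an \emph{inequality system}, both read off from the combinatorial data $(I,J,F)$, invoke Theorem~\ref{thm: genericeq} for the smooth solvability of the equalities, and then run a standard open-condition argument for the inequalities. Fixing $(I,J,F)$ forces the equalities $(wT)_j = p_j$ for $j \in J$, $w_i = 0$ for $i \notin I$, $(Tq)_i = Y_i$ for $i \in I$ (complementary slackness against $w_i>0$), $y_i u_{ij} = p_j$ for $(i,j)\in F$, and $x_{ij}=0$ for $(i,j)\notin F$, together with the market-clearing equations $\sum_i x_{ij}=q_j$ and $\sum_j p_j x_{ij}=w_iY_i$ and a scaling normalization such as $\sum_j p_j q_j = 1$. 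Because $F$ is acyclic, the remaining flows $x_{ij}$ on the edges of $F$ are linear combinations of the $w$'s, $q$'s and $p$'s, as noted before the theorem statement.

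First I would recall from Theorem~\ref{thm: genericeq} that, since $T,Y,U$ are in general position, this equality system determines $(p,w,q,X,y)$ as smooth (indeed algebraic) functions of the entries of $U$ on a neighbourhood of the given $U$, along the solution branch passing through $\eta(U)$. This already produces the smooth family $\eta'(U')$ and yields condition~(i), $\eta'(U)=\eta(U)$, by local uniqueness of the branch. What remains is to certify, on a possibly smaller neighbourhood, that this family consists of genuine generic SM equilibria carrying exactly the data $(I,J,F)$.

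Next I would list the strict inequalities encoding feasibility, optimality, and genericity: $w_i>0$ for $i\in I$; $q_j>0$ for $j\in J$; positivity $x_{ij}>0$ on every edge $(i,j)\in F$; the strict unprofitability $(wT)_j - p_j > 0$ for $j\notin J$; the inferior bang-per-buck $y_i u_{ij} < p_j$ for $(i,j)\notin F$; labor feasibility $(Tq)_i \le Y_i$ (strict for $i\notin I$ at a generic point); and $p_j>0$ throughout. Each is a strict inequality between smooth functions of $U$, by the smooth dependence of the previous paragraph, and each holds strictly at $U$ precisely because $\eta(U)$ is a \emph{generic} equilibrium with combinatorial data exactly $(I,J,F)$. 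By continuity every such strict inequality persists on an open neighbourhood of $U$; I would let $O_{I,J,F}$ be the finite intersection of these neighbourhoods with the domain of the smooth branch. On $O_{I,J,F}$ the equalities and inequalities together are exactly the complementary-slackness conditions for the programs $P_{\mathcal P}, DP_{\mathcal P}$ of \eqref{eq8} and the Fisher-market conditions of Section~\ref{subsec:consumption}, so $\eta'(U')$ is an SM equilibrium; and since its produced goods, wage-earning classes, and positive-consumption edges are exactly $J$, $I$, and $F$, its combinatorial data is $(I,J,F)$, giving condition~(ii).

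I expect the only genuine obstacle to be the smooth dependence invoked in the first step, namely the non-degeneracy of the Jacobian of the equality system, which is where general position of $T,Y,U$ is essential and which underwrites the implicit-function argument; but this is precisely the content of Theorem~\ref{thm: genericeq}, which I am free to assume. Granting that, the remainder of Theorem~\ref{T3} is the routine observation that finitely many strict inequalities cut out an open set and are stable under small perturbations, so the statement is essentially a repackaging of Theorem~\ref{thm: genericeq} made explicit through the construction of $O_{I,J,F}$.
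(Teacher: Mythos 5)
Your proposal is correct and follows essentially the same route as the paper: it relies on Theorem~\ref{thm: genericeq} for the smooth (algebraic) dependence of $(p,w,q,X)$ on $U$ along the branch through $\eta(U)$, and then observes that the defining conditions of the combinatorial data $(I,J,F)$ — positive flows on edges of $F$, inferior bang-per-buck off $F$, and strict unprofitability off $J$ — are finitely many strict inequalities in smooth functions of $U$, hence cut out an open set around $U$. The only difference is that you spell out the full list of equalities and inequalities and explicitly verify that the resulting tuple satisfies the SM equilibrium conditions, which the paper leaves implicit.
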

 The general pay-off function is to be pieced together by such a collection of open sets, indexed by combinatorics. On non-generic $U'$, the equilibrium $\eta (U')$ will have multiple feasible allocations and this determines a correspondence between the strategy space ${\mathcal U}$  and $\R^k $, the pay-off space. Even for a generic $U$, there may be multiple equilibrium points, viz., $\eta_1 ,\ldots, \eta_k $, and each of these will define an analytic sheet of the correspondence over the generic open set. Before we show these characteristics in detail in Section \ref{sec: correspondence}, we demonstrate the theory described so far using an example. 
 
 \subsection{Illustration} \label{subsec: smexample}
We describe an economy $\mathcal{H}_3$ with three labor classes ${\mathcal L}=\{ 1,2,3\}$ and three goods  ${\mathcal G}=\{ 1,2,3\}$ and construct a market equilibrium $\eta$ and the CCG around it, i.e., the payoff functions at $\eta$.  For this equilibrium $\eta $, $I(\eta )=J(\eta )=\{ 1,2,3 \}$. 

\begin{align}
T=\left[ \begin{array}{ccc}
0.5 & 0 & 0 \\ 
0.5 & 2 & 0 \\
0.5 & 4 & 8 \end{array} \right] \:
Y=\left[ \begin{array}{c}
5  \\ 
20 \\
100 \end{array} \right] \: 
U^t =\left[ \begin{array}{ccc}
1.5 & 0 & 0 \\ 
1.5 & 2 & 0 \\
0 & 2 & 1 \end{array} \right] \:
U(\alpha ,\beta)=\left[ \begin{array}{ccc}
1.5 & 0 & 0 \\ 
\alpha & 2 & 0 \\
0 & \beta & 1 \end{array} \right] \label{eq: soapmarket}
\end{align} 

The economy has specifications $T$ (Technology matrix), $Y$ (Labor availability), $ U^t$ (True Utility matrix) and a parametrized matrix $U(\alpha, \beta)$ as given in Eq. \eqref{eq: soapmarket}. Matrix $T$ has columns as technologies $T_1, T_2$ and $T_3$ and rows as the labor classes $L_1, L_2$ and $L_3$. We see that $T_2$ and $T_3$ do not need $L_1$ at all, while $T_1$ utilizes $L_3$ about 16 times more effectively. %In terms of labor availability $Y$, we see $L_3$ is 5 times larger than $L_2$ which is 4 times the size of $L_1$.  
The matrix $U$ is assumed to have two parameters; $\alpha$, to be used by $L_2$, and $\beta$, to be used by $L_3$. Note that $U^t \in U(\alpha,\beta)$. We compute (i) the dependence of the pay-offs on $\alpha $ and $\beta $, and (ii) the sub-domain of $U(\alpha,\beta)$ over which the chosen forest $F=\{(1,1),(2,1),(2,2),(3,2),(3,3)\}$ in Fig. \ref{fig: sec2exampleforest} is the equilibrium forest. 
\begin{figure}[H]
    \centering
   \includegraphics[width=2cm]{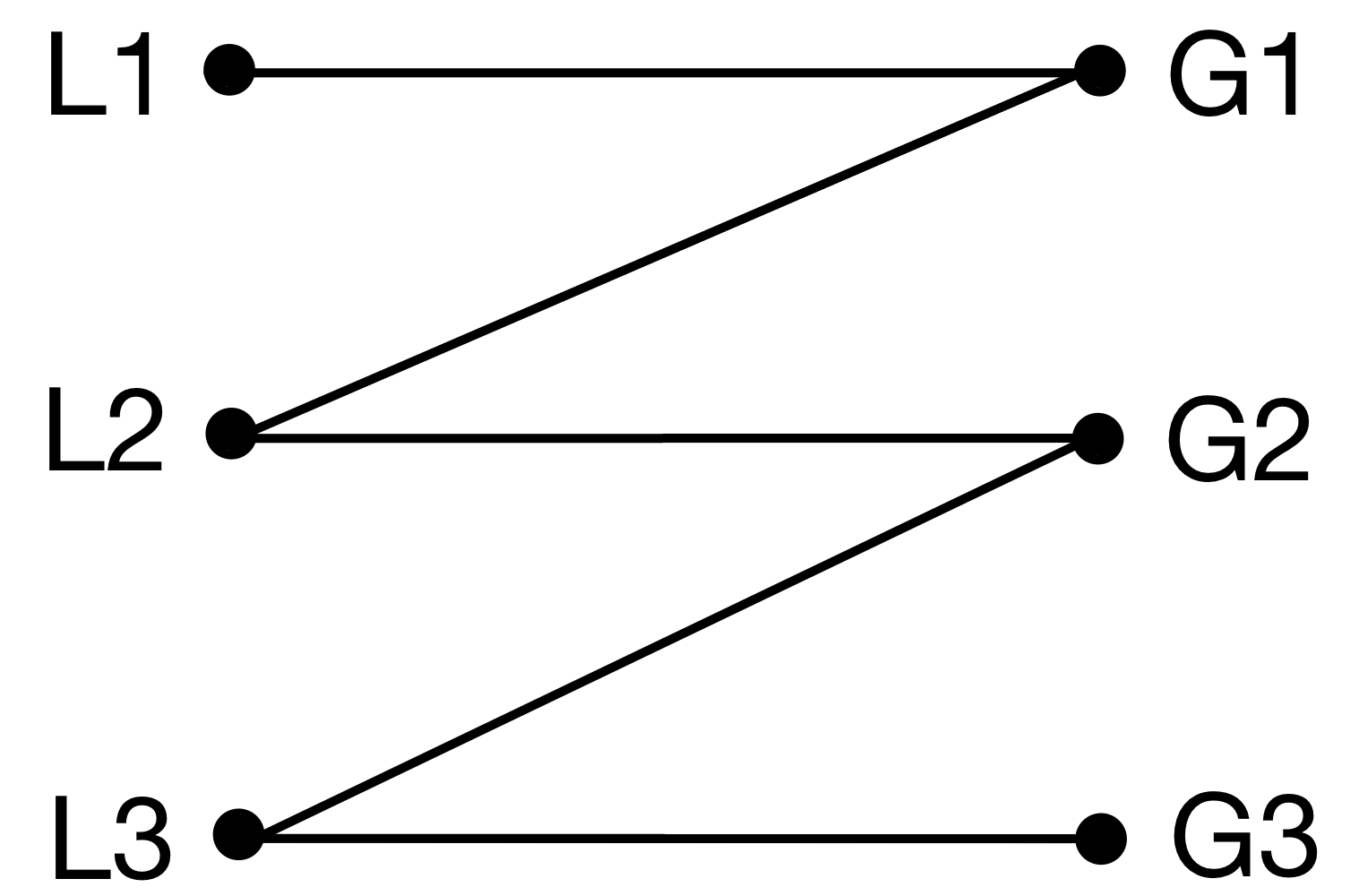}
\caption{Fisher Market Forest} \label{fig: sec2exampleforest}
\end{figure}
The production vector $q$ and wages $w$ follow from $P_\mathcal{P}, DP_\mathcal{P}$, independent of $U$ for fixed combinatorial data $I=J=\{1, 2, 3\}$, as given in Eq. \eqref{eq: exsec2wp}.
\begin{align}
q=T^{-1} Y
=\left[ \begin{array}{c}
10  \\ 
7.5 \\
8.125 \end{array} \right]; \ 
w^T =  [T^{-1}]^T \cdot 
\left[ \begin{array}{c}
p_1 \\ p_2\\ p_3  \end{array} \right] = 
\left[ \begin{array}{c}
2p_1 -p_2/2 +p_3 /8\\
p_2/2 -p_3/4\\
p_3/8  \end{array} \right] \: \label{eq: exsec2wp}
\end{align} 

This describes wages in terms of prices. Note that the support of $U$ gives us utilities for all edges of  the forest, and therefore allows us to compute the Fisher market forest. 
\iffalse
\begin{table}
$ \begin{array}{|l|l|l||l|l|l|} \hline
& (\alpha, \beta)& (1,2) & & (\alpha ,\beta ) & (1,2) \\ \hline \hline 
w_1 & \alpha \beta -\beta/2 +1/8& 9/8& q_1 & 10 & 10 \\ \hline 
w_2 &  \beta/2-1/4      & 3/4& q_2 & 15/2& 15/2\\ \hline 
w_3 & 1/8 & 1/8 &q_3 & 65/8& 65/8\\ \hline \hline 
p_1 &\alpha \beta/2 &1& X_{11} & 5\alpha \beta -5\beta/2 +5/8 & 45/8\\ \hline 
 & & & X_{21} &  5\beta/2 -5/8 & 35/8\\ \hline 
p_2 & \beta &2 & X_{22} & 15\beta /2 -35/8 & 85/8 \\ \hline 
&  & &X_{32} & 35/8& 35/8\\ \hline 
p_3 & 1 &1 &X_{33} & 65/8& 65/8\\ \hline \hline 
& && x_{11} & 10-5/\alpha +5/(4\alpha \beta) & 45/8\\ \hline 
 & & & x_{21} &  5/\alpha -5/(4\alpha \beta) & 35/8\\ \hline 
& & & x_{22} & 15/2-35/(8\beta) & 25/16 \\ \hline 
&  & &x_{32} & 35/(8\beta)& 35/16\\ \hline 
& & &x_{33} & 65/8& 65/8\\ \hline 
\end{array} $
\caption{SM equilibrium:  Parameterized by $(\alpha, \beta$), and for $(\alpha, \beta$) =(1,2)}
    \label{tab: sec2ex}
\end{table}
\fi
 %Next, let us use $X$ to compute the allocations and the prices.   
 Using the bang-per-buck relations between prices gives $p_2=\beta p_3$ and $p_1 =(\alpha \beta/2) p_3 $.  Next, the net class wages $W$ and the net revenue collected by the producers $P$ are calculated as: 
 \begin{align}
     W_1 & = y_1 w_1 = 5p_3 * (\alpha \beta -\beta/2 +1/8); & P_1 & =q_1 p_1  = 5p_3 \alpha \beta \notag  \\
     W_2 & = y_2 w_2 = 20p_3 *(\beta/2-1/4); & P_2 &=q_2 p_2  = 15 p_3 \beta /2\\
     W_3 & = y_3 w_3 = 100p_3 /8; & P_3 &= q_3 p_3 = 65p_3 /8 \notag
 \end{align}

%Let us put $p_3=1$ in what follows.

Setting $p_3=1$, since the forest $X$ is actually a tree, the conservation equations give us the amount spent  $A_{ij}=p_j *x_{ij}$, and hence allocations $X$, as given in Eq. \eqref{eq: allocsm}.
\begin{align}
   A_{11}&= W_1 =  5p_3 * (\alpha \beta -\beta/2 +1/8) &  x_{11} & = 10 - 5/\alpha + 5/4\alpha \beta \notag \\
   A_{21}& = P_1 -W_1 = 5p_3 * (\beta/2 -1/8) & x_{21} & = 5/\alpha -  5/4\alpha \beta \notag \\
   A_{22}&=  W2-A_{21} = 15p_3 (\beta /2) -5p_3 (7/8) & x_{22} &= 7.5 - 35/8\beta  \label{eq: allocsm} \\
   A_{32}&= P_2 -A_{22} = 5p_3 (7/8) & x_{32}& = 35/8\beta \notag \\
   A_{33}&= W_3 -A_{32}= 65p_3 /8 & x_{33} & = 8.125 \notag 
\end{align}

Finally, the total payoffs $b_i (\alpha,\beta )$ of each labor class $i$ are the net ``true'' utilities received from their total consumption, as given in Eq. \eqref{eq: smpayoffs}.
\begin{align}
b_1 (\alpha ,\beta )&= 1.5x_{11} (\alpha ,\beta) \notag \\
b_2 (\alpha ,\beta )&=1.5x_{21} (\alpha ,\beta) + 2 x_{22} (\alpha ,\beta) \label{eq: smpayoffs} \\
b_3 (\alpha ,\beta )&= 2x_{32} (\alpha ,\beta) + x_{33} (\alpha ,\beta) \notag 
\end{align} % This completes the specification of the CCG at the point $\eta$. 
The conditions that wages, prices, and allocations, i.e., the money flows $x_{ij}$, need to be positive, give us the following constraints: (i) $8\alpha \beta -4\beta +1 > 0$, (ii) $4\beta -1  >  0$ and (iii) $ 60\beta -35  >  0$. Under these conditions that define an open set, forest $F$ arises as the equilibrium forest, giving the fixed combinatorial data $(I,J,F)$. Note that the payoffs in Eq. \eqref{eq: smpayoffs} are also valid whenever $\eta $ remains the market equilibrium, i.e.,  where $\alpha $ and $\beta$ are in the open set. Thus, we have an $SM$-equilibrium $\eta(\alpha,\beta)=(w(\alpha , \beta),q(\alpha ,\beta ),p(\alpha ,\beta ),X(\alpha,\beta))$, parametrized by $\alpha$ and $\beta$.

This illustrates that the local combinatorial data is sufficiently explicit to enable the computation of the pay-off functions. Moreover, significant benefits may accrue to players if they utilize the freedom of posturing their utility functions.

\section{Correspondence associated with CCG} \label{sec: correspondence}

In this  section, we illustrate the connection between the strategy space $\mathcal{U}$ and the combinatorial data of Fisher forests. We utilize a small example to demonstrate the theory and refer to Appendix \ref{app: stratsinccg} for a general market scenario.

We consider a two-class economy with  specifications $T$ (Technology matrix), $Y$ (Labor availability), $ U^t$ (True Utility matrix), and ${\mathcal U}=\R^{2\times 2}$ (Strategy matrix) as given in Eq. \eqref{eq: ccgexeconomy}. Since Fisher solutions do not change with independent row scaling of the utility matrix, the description of $U$ suffices in representing $\{\mathcal U\}$, with $\alpha= \frac{u_{11}}{u_{12}}$ and $\beta= \frac{u_{21}}{u_{22}}$.
\begin{align}
    T=\left[ \begin{array}{ccc}
0.25 & 0  \\ 
0.25 & 1  \end{array} \right] \: 
Y=\left[ \begin{array}{c}
2  \\ 
4 \end{array} \right] \:
 U^t= \left[ \begin{array}{cc}
  1   & 1 \\
  1   & 1 
\end{array} \right] \:
U= \left[ \begin{array}{cc}
  \alpha   & 1 \\
  \beta   & 1 
\end{array} \right] \label{eq: ccgexeconomy} \end{align} 
Assuming $0 < \alpha, \beta < \infty$, we now cover the entire strategy space with the finite collection of Fisher Forests. This enable us to define payoff functions for the whole strategy space. %This decomposes ${\mathcal U}$ into various zones by their associated combinatorial signatures in $SCM(T,Y,U)$. %We also analyze the case when we transit from one zone to another, and finally, when one of the labor classes is shut out of the market.\\
In a two-class, two-goods economy, there are nine possible ways to allocate the goods produced among the classes. These include six forests and one cycle, where both classes participate; and two forests where only one class is active. Out of these nine possibilities, the inputs $(T, Y)$ allow for six distinct combinatorial structures while satisfying the production and consumption side constraints, as given in Fig. \ref{fig: ccgexforests}.
\begin{figure}[H]
    \centering
\includegraphics[width=10cm]{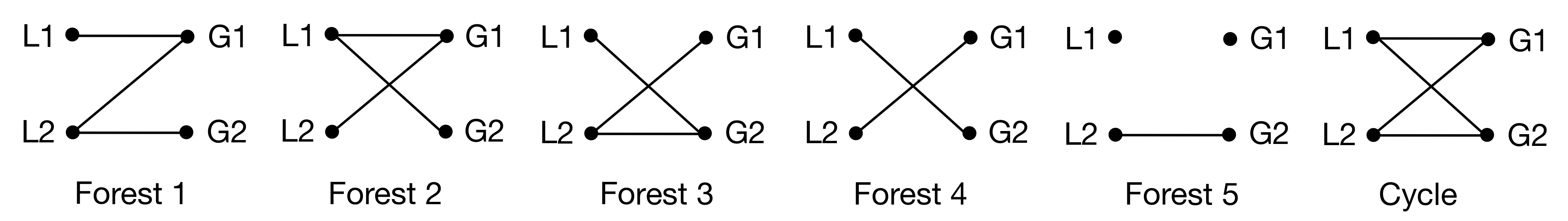}
\caption{Feasible combinatorial structures for equilibrium $SM(T,Y,U)$, with for $J(\eta) = \{1,2\}$. }
    \label{fig: ccgexforests}
\end{figure}
%The production vector $q$ and wages $w$ follow from $P_\mathcal{P}, DP_\mathcal{P}$, independent of $U$ whenever $J(\eta) = \{1,2\}$. The values of $q, w$ are as given in Eq. \eqref{eq: ccgexqw}.
\iffalse
\begin{align}
q=T^{-1} Y=
\left[ \begin{array}{ccc}
4 & 0  \\ 
-1 & 1  \end{array} \right] Y
=\left[ \begin{array}{c}
8  \\ 
2  \end{array} \right];  w=
\left[ \begin{array}{ccc}
p_1 &  p_2   \end{array} \right] \cdot T^{-1}= 
\left[ \begin{array}{c}
4p_1 -p_2  \\ 
p_2 \end{array} \right] \label{eq: ccgexqw}
\end{align}
Utilizing the labor availability $Y$ from Eq. \eqref{eq: ccgexeconomy}, the class-wise wages may be obtained as $m_1=2(4p_1-p_2)$ $m_2=4p_2$. We additionally impose $m_1+m_2 = 1$, for allowing the conservation of money.
\fi
 We see that the actual combinatorial data $(I,J,F)$ and the SM equilibrium values $(p,q,w,X)= SM(T,Y,U)$ depend on the choice of $\alpha$ and $\beta$.   As guaranteed by Theorem \ref{thm: genericeq},  there is an open set for each forest, a "zone", of the $\alpha$-$\beta$ space over which the forest is the Fisher forest of the equilibrium. For fixed combinatorial data $(I,J,F)$, the corresponding zone-defining conditions can be derived from the positivity of wages and allocations, and the maximum bang-per-buck conditions, as seen in effect in Section \ref{subsec: smexample}. Conditions specific to the feasible combinatorial structures are given in Table \ref{tab: forestdescriptions}, with a complete description of corresponding $(p,q,w,X)$ in Appendix \ref{app: foresttable}. %Here, $b_1, b_2$ denote the ``true'' net utilities of class-1 and 2 respectively. 
%The cyclical combinatorial structure is associated with the specific condition of  $\alpha=\beta$, as we discuss shortly. The last feasible combinatorial structure is associated with $\beta < 1/4$, making $I(\eta) = J(\eta)= \{2\}$ and $F(\eta) = \{2,2\}$.
%In this case, $w_1=0$, $b_1 = 0 $ and $b_2 = 4$ as the production amount is $(0,4)$. We denote this by forest-5 or zone-5. Note that we don't have any SM equilibrium with $w_2=0$, for any set of prices.   %In this case, labor class 2 produces good 2 ($q = 4$) and gets the whole share of the economy. 
 %We analyze the condition $\alpha=\beta$ i.e. the cycle in the next section.
 
\begin{table}[h]
\centering
\small
\begin{tabular}{|l|l|l|l|l|l|l|}
\hline &&&&&&\\[-0.7em]
 & Forest-1                                                                                                      & Forest-2                                                                                                     & Forest-3         & Forest-4   & Forest-5    & Cycle                                                                                       \\ \hline &&&&&&\\[-0.7em]
     Zones  & $\alpha \geq \beta > 1/4$  & $\beta \geq \alpha > 1/2$ & $1/4 < \beta< 1/2$ and  $\beta \geq \alpha$ & $ \alpha \leq 1/2 \leq \beta $ & $\beta \leq 1/4$ & $ \alpha=\beta $
     \\ \hline
\end{tabular}
\caption{$SM(T,Y,U)$ zone descriptions for the structures in Fig. \ref{fig: ccgexforests}} \label{tab: forestdescriptions}
\end{table} 
\normalsize

We also classify the generic  equilibrium points here. Recall that $\eta $ is a generic equilibrium if (i) for $j\not \in J(\eta )$, we have $(wT)_j >p_j $, and (ii) for $(i,j) \not \in F(\eta )$, we have  $u_{ij}/p_j < \max_k u_{ik}/p_k $. Therefore, we see that an equilibrium point belonging to, say zone-1, is generic if and only if $\alpha > \beta$. We can define an open set for zone-1, {\em viz. }, $\alpha > \beta > 1/4$, comprising only of generic equilibrium points. Similarly, for zone-4, the open set would be $ \beta > 1/2 > \alpha $. 
\begin{figure}
    \centering
    \includegraphics[width=4.5cm]{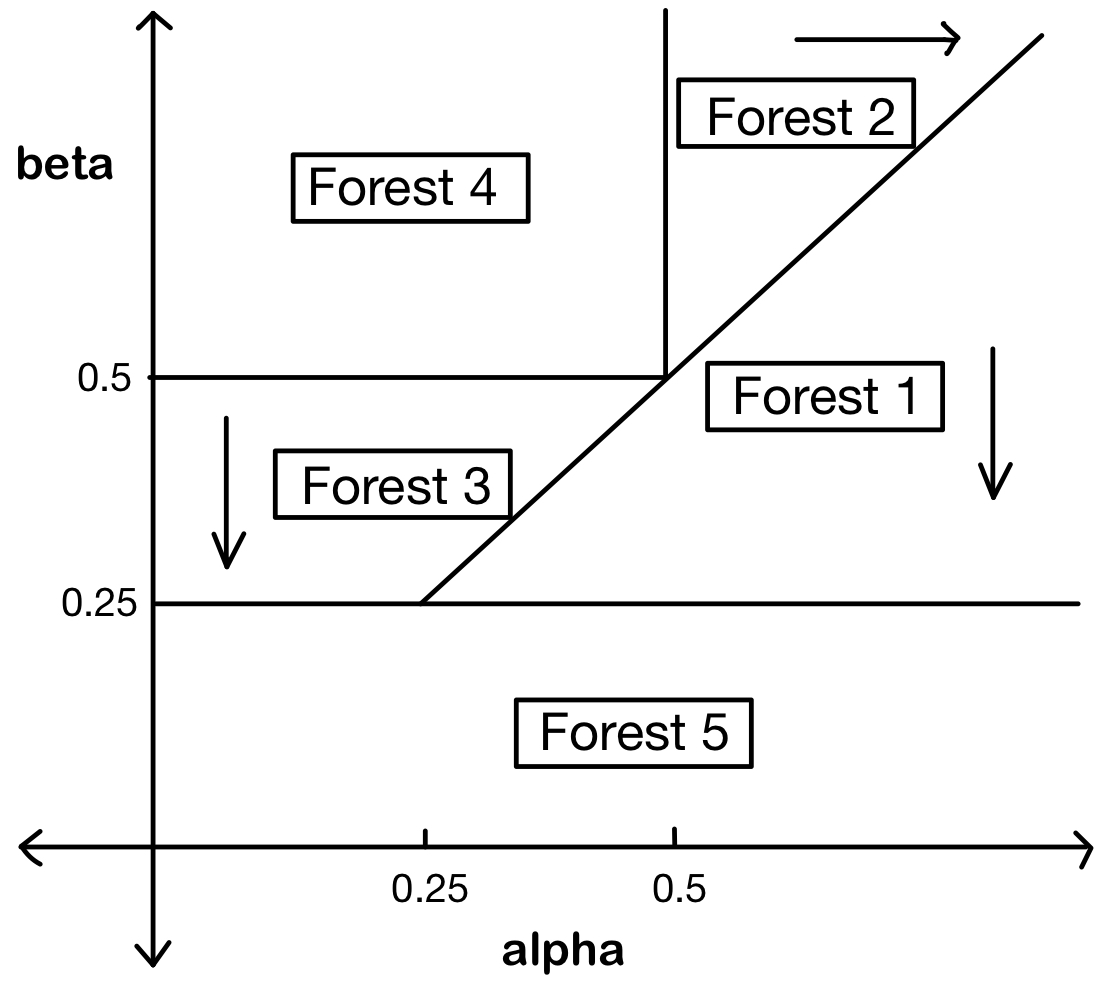}
    \caption{The equilibrium $SM(T,Y,U(\alpha,\beta))$ categorized into multiple zones, as a function of $(\alpha,\beta)$.}
    \label{fig: zones}
\end{figure}
Finally, we may summarize this discussion through Fig. \ref{fig: zones}, where the five zones are marked corresponding to the forests. The arrows  indicate the direction of optimization for the dominant class i.e. the class that has control over the ratio $\alpha$ or $\beta$. The interior of each zone is an open set consisting of generic points and the boundaries correspond to the non-generic equilibrium points.

%At the boundary points,  two or more forests are feasible. 
Fig. \ref{fig: zones} shows that by changing $\alpha$ or $\beta$, it is possible to transit from one forest to another, by crossing the non-generic  (boundary) points where both forests are feasible.   It is shown earlier that the set of allocations in the Fisher market is hemicontinuous with respect to initial endowments and utility functions \cite{vazirani2011market} . Here, we show that although multiple allocations are possible at such points, utilities on the boundary points are bounded by the utility limits of the forests on both sides. In other words, allocations and utilities at the transitions are convex combinations of the boundaries of those obtained in the adjoining zones. To make this precise, we make the following definition.
\begin{definition}
 Let $x\in \mathcal{U}$ be a point on the boundary of two zones, say Zone A and Zone B and let $\eta =(p,q,w,X)$ be a typical point above $x$, i.e., $\eta $ is an equilibrium for the parameter $x$. Let $\mathcal{X}(x)$ be the collection of all allocations of equilibria above $x$ in the U-space. We say that $x$ is a \textit{manifold point} if the set $U_i (\mathcal{X}(x))$ is a  bounded interval and its bounds are obtained as the limits  $\lim_{q \rightarrow x} U_i (X(q))$ and $q\in$ Zone A and $q \in$ Zone B. 
\end{definition}

For a point $x$ located on the line $\alpha = \beta$, with $\alpha (x) = \beta (x) = \mu > 0.5$, i.e. on the transition between Zone 1 and Zone 2, the corresponding equilibrium $\eta (x) = (p,q,w,X)$ has: (i) $I(x) = J(x) = \{ 1,2\}$, and (ii) $F(p) = \{(1,1),(1,2),(2,1),(2,2)\}$, i.e. a cycle. Furthermore, there exists a set of possible allocations  $\mathcal{X}$ at $\eta (x)$, and the bounds on $U_i (\mathcal{X})$ are precisely those achieved as $\lim_{q\rightarrow x} U_i (X(q))$ for $q$ in zone 1 and 2, making it a manifold point. Similarly, we find that $x$ located on $1/4 < \alpha = \beta < 1/2$, i.e., between Zone 1 and 3, is a manifold point. For the points on the boundaries with Forest 4, however, we see that the limits $\lim_{q\rightarrow x} U_i (X(q))$ are equal from both zones, thus giving a unique allocation $X$.  On the other hand, we see that the points on the boundary $\beta = 1/4$ are not manifold points. This is because, for each point on this line,  there is a unique allocation $X$ leading to a unique $U_i (X)$ which does not equal $\lim_{q\rightarrow x} U_i (X(q))$ for $q$ in zone 1 or 3.

%We now classify the zones into interior points and the boundaries where two or more forests are possible. As stated earlier, the interior points refer to generic points which form an open set. 
Further corresponding to the interior, i.e., generic points of the zones, we have payoffs $b_1,b_2$ defined uniquely, which are continuous functions of $\alpha, \beta$, as guaranteed by Theorem \ref{T3}. Moreover, these are invertible functions on their restricted domain of $\alpha, \beta$, which makes the sets of possible payoffs $b^p_1$, $b^p_2$ open. It follows that the correspondences $\mathcal{N}_i = (\alpha, \beta, b^p_1, b^p_2)$ are open for each forest $i$. Moving further, we see that for points on $\alpha = \beta$, associated with a cycle, the corresponding  $b^p_1$ and $b^p_2$ belong to open sets. For example, on the boundary of forests 1 and 3, the open set  $U^p_1$ is given by $1/4 < \alpha  < 1/2$ and  $(8\alpha-2,  8- 2/\alpha)$.  The region looks like half of a parabola, bounded from all sides. Thus, the interior of every possible solution or zone is open and the neighborhood of each point is homeomorphic to open subsets of $\mathbb{R}^2$. For forests, the homeomorphism is given by the inverse of utility functions and for the cycle, the sets are open in $\mathbb{R}^2$.

Finally, we claim that the boundaries of the forests and the cycle form 1-dimensional entities which serve as boundaries to the described 2-dimensional manifolds. Each boundary can be given by a unique equation in $\mathbb{R}^2$.  On line $\alpha=\beta$, there are two boundaries, one coming from $\alpha < \beta $ and another from $\alpha > \beta$. In between these two, a 2-dimensional plane is situated on each part of the segment $\alpha = \beta$, i.e. on the boundary of forest-1 and 3 and forests 1 and 2. When considering the closures of the open sets, we see that correspondences intersect along these boundaries. 

We therefore establish that in the region $\beta > 1/4$, payoff function $b_i$ is a 2-dimensional manifold with boundary, consisting of all `manifold' points. Moreover, a correspondence $\mathcal{N}$ = $(\alpha, \beta, b^p_1, b^p_2)$ can be defined between the strategy space and the payoffs space. A general version of markets is dealt with in Appendix \ref{app: stratsinccg}, where we argue that the same results follow. \\
\noindent
\textbf{Strategic Analysis:}
As shown in Fig. \ref{fig: zones}, class-2 is dominant in the sense that it has a strategy to become the only active class in the economy, just by reducing $\beta $. However, it is not in its interest to completely drive out class 1. If the state is in any zone with both classes active, $\beta$ can be decreased to reach zone 5, where class-2 gets utility 4. But, due to a discontinuity in the utility function, for any $\alpha$, class-2 achieves the highest payoff (getting arbitrarily close to 10), whenever its $\beta$ approaches 1/4, but is,  strictly more than 1/4. Thus, the best strategy for class-2 is to keep class-1 active and pose $\beta$ as close to, but greater than 1/4. Technically, we see that the  discontinuity of the utility functions results in the non-existence of Nash equilibria.
 
Although this economy rules out the possibility of a Nash equilibrium by making one class clearly dominant,  the general scenario has a possibility of the existence of Nash equilibria and is discussed in Appendix \ref{app: stratsinccg}.

\section{The role of strategy} \label{sec: roleofstrategy}
In this section, we utilize the SCM model to discuss the case of a specific example - the small market of soaps. We demonstrate the use of $U_i$, the utility vector of an agent or labor class $i$, as devices to change allocations. We also illustrate how $T$, the set of technologies, and $Y$, the availability of labor determine $X$, the market structure, and the payoffs in the small market.
\iffalse
\subsection{The Fisher market game \textcolor{blue}{should we remove this?}}

To illustrate the role of $U$ as a strategy, we first begin with the Fisher market game, a simpler game that incorporates just the consumption side.
Consider a Fisher market with two agents, $L_1$ and $L_2$, each with wages  $=10$ and two goods, g1 and g2, each with available quantity $q_1 =q_2 =1$, and the true utility matrix $U$ in Eq. \eqref{eqn: sec6fisherex} below. In other words, $L_1$ prefers $g_1$ and $L_2$ prefers $g_2$. For this $U$, the Fisher market equilibrium has price vector of $p=[10,10]$ and an allocation of goods $X$, with $x_{11}=x_{22}=1$ and $x_{12},x_{21}=0$. This results in utilities of $u_1=2$ and $u_2=3$ for $L_1$ and $L_2$ respectively.
\begin{equation} \label{eqn: sec6fisherex}
  U  = \left[ \begin{array}{cc}
2 & 1 \\ 1 & 3 \end{array} \right], \: \: \:
U'= \left[ \begin{array}{cc}
1 & 1.5 \\ 1 & 3 \end{array} \right]
\end{equation} 
Now, if $L_1$ alters her utilities to $U'$ in Eq. \eqref{eqn: sec6fisherex}, the new price vector becomes $p'=[8,12]$ and the new allocation, represented by $X'$, is $x'_{11}=1, x'_{12}=1/6, x'_{22}=5/6, x'_{21}=0$. Evaluating this new allocation by the real utility matrix $U$ gives the new utilities as $U'_1 =2 \frac{1}{6}$ and $U'_2 =\frac{5}{2}$. Thus, $L_1 $ benefits by reporting a fake utility function. This illustrates the concept of the \emph{Fisher Market Game} \cite{adsul2010nash}, where strategic behavior in markets, especially the online markets can lead to incongruities between market prices and true utilities. 
\fi
\subsection{The market of soaps}

We consider a society that produces a single product, such as household soaps. The labor classes involved in the production of this product may be classified as follows: $L_3$, a low-technology worker in the factory, or an individual producer at the household scale; $L_2$, an intermediate-technology worker who can adapt processes to address consumer choice, coordinate and improve production and distribution processes or is able to lead small and medium enterprises; $L_1$, a high-end technology worker who can bring international investments and machinery, can produce at a very large scale, and can run elaborate distribution and marketing, networks. The matrices $T,Y,U^t$ are as given in Section \ref{subsec: smexample} and are reproduced again in Eq. \eqref{eq: soapmarket2}. This is a simplified representation of the data on page 16, \cite{awdate}, a report on soap manufacturing in India. 

\begin{align}
T=\left[ \begin{array}{ccc}
0.5 & 0 & 0 \\ 
0.5 & 2 & 0 \\
0.5 & 4 & 8 \end{array} \right] \:
Y=\left[ \begin{array}{c}
5  \\ 
20 \\
100 \end{array} \right] \: 
U^t =\left[ \begin{array}{ccc}
1.5 & 0 & 0 \\ 
1.5 & 2 & 0 \\
0 & 2 & 1 \end{array} \right] \:
U(\alpha ,\beta)=\left[ \begin{array}{ccc}
1.5 & 0 & 0 \\ 
\alpha & 2 & 0 \\
0 & \beta & 1 \end{array} \right] \label{eq: soapmarket2}
\end{align} 

The technology matrix $T$ lists the ways of manufacturing soaps in this economy. The last column $T_3 $ uses only the $L_3$ labor class, while $T_2$ is a boutique manufacturing process that utilizes both $L_2$ and $L_3$ labor classes but with a reduced requirement of $L_3$ and a less than proportionate the use of $L_2$. $T_1$ is the most efficient method in terms of manpower. It uses a half-unit of $L_1$ to substantially reduce the number of both $L_2$ and $L_3$. In real life, $T_1$ encompasses various activities such as packaging, transport, advertising, rents on expensive machinery, and other activities related to mass production and distribution.

The payoff functions $b_1, b_2$ and $b_3$ of individual labour classes as functions of the strategy vector $U(\alpha , \beta)$, are also as in Section \ref{subsec: smexample} and are reproduced below:
\begin{align}
b_1 (\alpha ,\beta )&= 1.5x_{11} (\alpha ,\beta) \\
b_2 (\alpha ,\beta )&=1.5x_{21} (\alpha ,\beta) + 2 x_{22} (\alpha ,\beta)  \\
b_3 (\alpha ,\beta )&= 2x_{32} (\alpha ,\beta) + x_{33} (\alpha ,\beta)
\end{align} 
Note that, the allowed flexibility in $U$ is restricted to $\alpha$ and $\beta$, merely for the purpose of illustration and easy computability of the payoff functions. 
\subsubsection{Utility amendments:}
%Note that since the scaling of a row of $U$ does not alter the equilibrium, we scale $U$ so that $u_{11}=1$. This facilitates the use of $U(\alpha, \beta )$ as below, which matches with Ex.\ref{app: smexample}. 
%Keeping the market structure fixed, i.e., while $\alpha, \beta$ lie within a suitable open set as given in Eq. \eqref{eq: soapmarket}. 
\iffalse
The payoffs $b_i (\alpha ,\beta )$ are:
\begin{align}
b_1 (\alpha ,\beta )&= 1.5x_{11} (\alpha ,\beta) \\
b_2 (\alpha ,\beta )&=1.5x_{21} (\alpha ,\beta) + 2 x_{22} (\alpha ,\beta)  \\
b_3 (\alpha ,\beta )&= 2x_{32} (\alpha ,\beta) + x_{33} (\alpha ,\beta)
\end{align} 
\fi
The payoff functions may be used to construct a finite strategy bimatrix game for the labour classes $L_2 $ and $L_3$. We choose $3$ values for $\alpha$, viz., $\alpha_1 =1,\alpha_2 =1.5,\alpha_3 =3$ and $3$ values for $\beta$, viz., $\beta_1 =1,\beta_2=2,\beta_3 =3$, which lie within the open set of valid values, derived in Section \ref{subsec: smexample}. %Note that the true utilities correspond to $\alpha = 1.5$ and $\beta =2$ and at these values, the payoffs are $b_1 =2.125, b_2= 0.75$ and $b_3=0.125$.  
For nine combinations of $\alpha, \beta$, we record the net payoffs $b_i (\alpha,\beta )$ for all three classes in Table \ref{tab: soapmarketutilities}. The center of each table, i.e., the values $b_i (\alpha_2,\beta_2 )$ correspond to $U=U^t$, i.e., true expression of utilities.
\begin{table}[H]
\small
    \centering
$ \begin{array}{|c||c|c|c|}\hline 
L_1 & \beta_1 & \beta_2 &\beta_3 \\ \hline \hline 
\alpha_1 & 1.875 & 1.688 & 1.625 \\ \hline 
\alpha_2 & 2.250 & {\bf 2.125} & 2.083 \\ \hline 
\alpha_3 & 2.625 & 2.563 & 2.542 \\ \hline \end{array} \: \: \: 
\begin{array}{|c||c|c|c|}\hline 
L_2& \beta_1 & \beta_2 &\beta_3 \\ \hline \hline 
\alpha_1 & 0.594 & 0.859 & 0.948 \\ \hline 
\alpha_2 & 0.500& {\bf 0.750} & 0.833  \\ \hline 
\alpha_3 & 0.406 & 0.641 & 0.719 \\ \hline \end{array} \: \: \:
\begin{array}{|c||c|c|c|}\hline
L_3 & \beta_1 & \beta_2 & \beta_3 \\ \hline \hline 
\alpha_1 & 0.169 & 0.125 & 0.110 \\ \hline 
\alpha_2 & 0.169 & {\bf 0.125} & 0.110 \\ \hline 
\alpha_3 & 0.169 & 0.125 & 0.110 \\ \hline \end{array} $
    \caption{Net payoffs $b_i(\alpha, \beta)$  for classes $L_i$, $i\in \{1,2,3\}$, with "true" values in bold}
    \label{tab: soapmarketutilities}
\end{table}
\normalsize

In the true state, there is a large disparity in the payoffs received by the three labor classes. $L_2$ earns 6 times that of $L_3$ and $L_1$ earns 2.833 times that of $L_2$ and 17 times that of $L_3$. It is also evident that there is a significant benefit to be gained by reporting $U\neq U^t$, i.e., modified utility vectors.
In fact, $L_3$ can increase its payoffs by $35\%$ by reducing $\beta_2$ to $\beta_1$, $L_2$ can then minimize its losses by $21\%$ by reducing $\alpha_2$ to $\alpha_1$, and this causes a loss of $12\%$ to $L_1$. This implies that $L_3$ should devalue good $g_2$ in comparison to $g_1$, while $L_2$ devalues $g_1$ in comparison to $g_2$. In other words, the Nash equilibrium for this game is $(\alpha_1,\beta_1 )$. 
\subsubsection{The boutique market structure:}
We now consider an alternate market structure, with $F_b =\{ (1,1),(1,2),(2,1),(3,1),(3,3)\}$, as shown in Figure \ref{fig: boutique}. 
\begin{figure}[H]
    \centering
   \includegraphics[width=2.2cm]{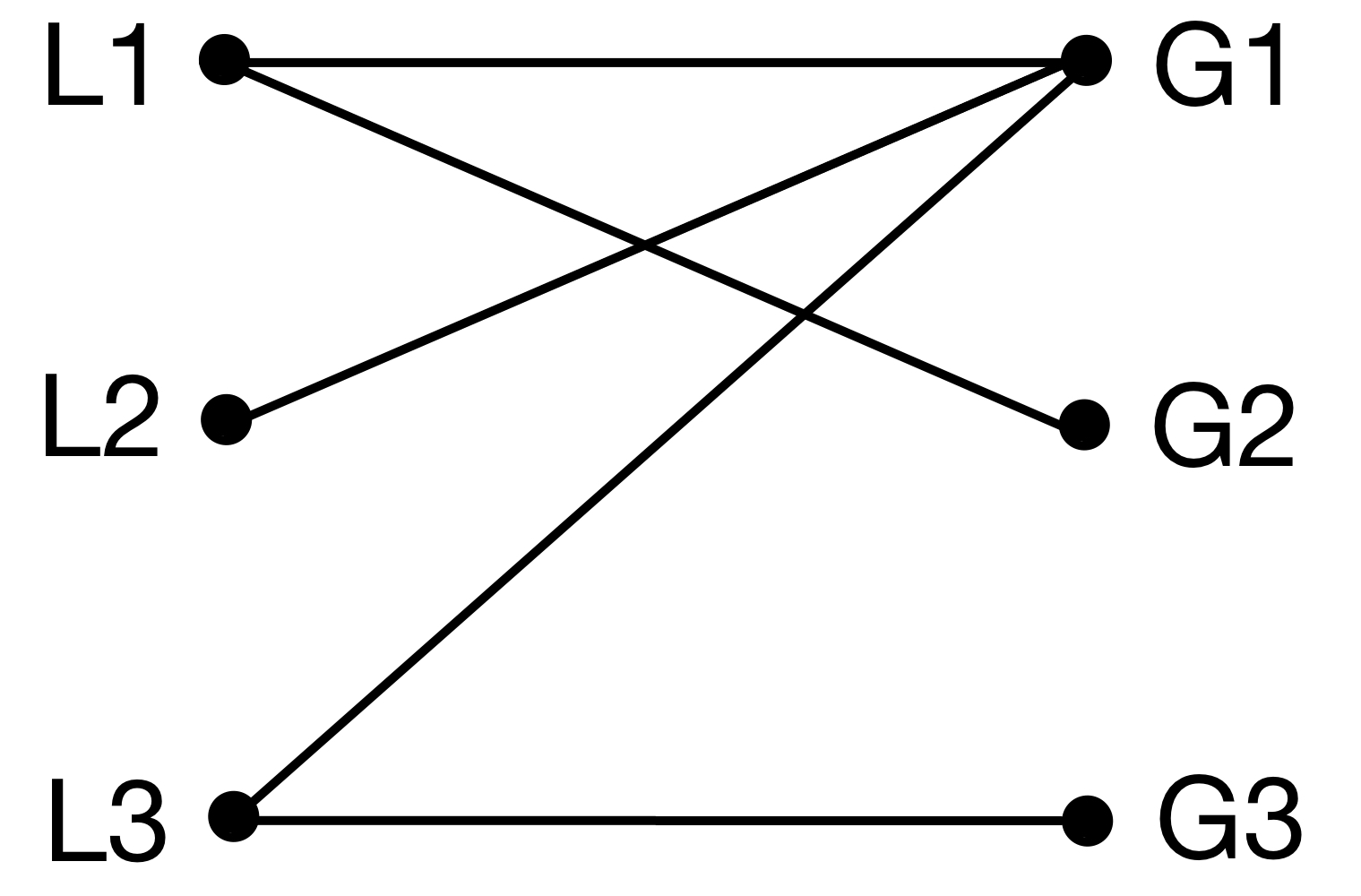}
\caption{Fisher Market Forest: boutique market structure} \label{fig: boutique}
\end{figure}
Here the production of the boutique soap, $g_2$, is completely consumed by $L_1$. All classes consume the mass-produced soap, $g_1$, while only $L_3$ consumes its own household production, $g_3$. We define $U_b (\alpha, \beta )$ as the utility function for this consumption structure, where the ``true" utility matrix $U^t$ is $U_b (\frac{4}{3}, \frac{2}{3})$, as in Eq.  \eqref{eq: soapmarketstructures}.
\begin{equation}
U_b (\alpha , \beta) =\left[ \begin{array}{ccc}
1 & \alpha & 0 \\  
1 & 0 & 0 \\
1 & 0 & \beta \end{array} \right] \: \: \: \: \: \: 
U^t =\left[ \begin{array}{ccc}
1 & 4/3 & 0 \\ 
1 & 0 & 0 \\
1 & 0 & 2/3 \end{array} \right] \label{eq: soapmarketstructures}
\end{equation}

The price vector corresponding to $U_b (\alpha , \beta)$ would be $p(\alpha , \beta)=[1, \alpha , \beta]^T$. Since the technology matrix $T$ and the labor composition of the population, $Y$, remain the same, the quantity of each good produced remains unchanged at $q=[10, 7.5, 8.125]^T$.

It turns out that this market structure is not feasible for $U(\frac{4}{3},\frac{2}{3})$, i.e., at ``true'' prices, this consumption pattern will not be observed. However, it is seen for $U(1,\frac{2}{3})$, where the boutique soap $g_2$ is valued (and priced) the same as the mass-produced $g_1$. It is also seen at $U(1,1)$, where $L_3$ chooses to value its household production $g_3$ as equal to $g_1$. The corresponding payoffs are given in Table \ref{tab: soapmarketstructtab}.
\begin{table}[H]
\small 
    \centering
    $\begin{array}{|c||c|c|c|}\hline 
(\alpha ,\beta) & b_1 & b_2 & b_3 \\ \hline \hline 
(1,2/3) & 3.124 &0.503 & 0.125 \\ \hline 
(1,1)  & 3.187 &0.375 & 0.147\\ \hline 
\end{array}$
    \caption{Total payoffs for all the classes, under different settings of $(\alpha, \beta)$}
    \label{tab: soapmarketstructtab}
\end{table} 
\normalsize
\vspace{-0.5cm}
The reasons for the unsustainability of ``true'' prices in $X_b$ are straightforward: Given that all boutique production, i.e., of $g_2$, must be consumed by $L_1$, either the amount produced, i.e., $q_2$, must be small or the price $p_2$ must be low as compared to $p_1$. This is easily verified by setting the labor composition to $Y_b$ equal to $[5, 10, 100]^t$, leading to a lower production of $g_2$. With this, allocation $X_b $ becomes feasible and the payoff equal $b=[2.129, 0.750, 0.128]$.
\subsubsection{Technological updates:}
Here, we examine the impact of changes in technology, $T$, while preserving its hierarchical structure of labor classes. We introduce a new technology matrix, $T'$, which differs from the original matrix, $T$, as given in Eq. \eqref{eq: soapmarkettech}. 
\begin{equation}
T=\left[ \begin{array}{ccc}
0.5 & 0 & 0 \\ 
0.5 & 2 & 0 \\
0.5 & 4 & 8 \end{array} \right] \: \: \: \: \: \: 
T'=\left[ \begin{array}{ccc}
0.5 & 0 & 0 \\ 
0.5 & 2 & 0 \\
1 & 4 & 8 \end{array} \right]
 \label{eq: soapmarkettech}
\end{equation} 
This change may occur by internalizing some of the informal labor required in the life-cycle of good $g_1$, which is mass-produced soap. This could include incorporating additional garbage collection and disposal cycle for the production good $g_1$, which is not done for good $g_2$ and $g_3$. By internalizing this process, $T'$ represents a possible new technology matrix. Table \ref{tab: soapmarkettechtab} shows the comparison of the new market equilibrium, i.e., $SM(T', Y, U)$ with that of the original one, i.e., $SM(T, Y, U)$.
\begin{table}[H]
\small
    \centering
$  \begin{array}{|c||c|c|c||c|c|c|}\hline 
 & \multicolumn{3}{|c|}{T} & \multicolumn{3}{|c|}{T'} \\ \hline
\text{production} & 10& 7.5 & 8.125 & 10 & 7.5 & 7.5 \\ \hline 
\text{payoffs} & 2.125 & 0.75 & 0.125 & 2 & 0.75 & 0.125 \\ \hline 
\end{array} $
    \caption{Comparison of the goods production and class payoffs, with technologies $T$ and $T'$ }
    \label{tab: soapmarkettechtab}
\end{table} 
\normalsize
\vspace{-0.5cm}
Table \ref{tab: soapmarkettechtab} shows that while the production of good $g_3$ has decreased, it is the payoffs of labor class $L_1$ that have decreased while the production and payoffs of other labor classes and goods remain unchanged.
\subsubsection{Labor migration:}
Finally, we examine the case with changes in $Y$, representing the relative strengths of the labor classes. This may occur, for example, through education. For instance, we can consider changes in $Y_{21}$ and $Y_{32}$ obtained by moving 5 units from $L_2$ to $L_1$ and $L_3$ to $L_2$ respectively, as given in Eq. \eqref{eq: soapmarketY}.
\begin{equation} Y=\left[ \begin{array}{c}
5  \\ 
20 \\
100\end{array} \right] \: \: \: Y_{21} = \left[ \begin{array}{c}
10  \\ 
15 \\
100 \end{array} \right] \: \: \: Y_{32} = \left[ \begin{array}{c}
5  \\ 
25 \\
95 \end{array} \right] \label{eq: soapmarketY}
\end{equation}
If $U^t$  is being asserted by every labor class, then while the quantities of production may change with $Y$, the payoffs of the various classes remain unchanged. However, if there is a deviation from the expected values of $\alpha \neq 1.5$ or $\beta \neq 2$, then these distortions are further exacerbated by migrations. The payoffs of $L_1$ are not affected by migrations. For the other two classes, out-migrations benefit a class if it is already making a strategically beneficial choice of its utility function. Out-migrations or in-migrations have a small impact on the net payoffs of non-participating classes. The observations can be summarized in Table \ref{tab: soapmarketYtab}.
\begin{table}[H]
    \centering
$\begin{array}{|c|c|c|c|}\hline 
class & regime & Y_{21} & Y_{32} \\ \hline \hline 
L_2 & \alpha <1.5 &\uparrow & \downarrow \\ \hline 
L_2 & \alpha >1.5 & \downarrow & \uparrow \\ \hline 
L_3 & \beta <2 & - & \uparrow \\ \hline 
L_3 & \beta >2 & - & \downarrow \\ \hline 
\end{array} $
    \caption{Increase/decrease in the payoffs of strategic classes, with the  migration effect}
    \label{tab: soapmarketYtab}
\end{table}

\section{Conclusion} \label{sec: conclusions}

This paper shows that consumer choice is indeed an important determinant of the wage distribution in an economy. This connection provides an important tool for wage-earners to understand how they can adapt their consumption so as to support a more equitable distribution of wages. It does this by providing a modeling and analytic framework which allows us to explore concretely the thread between consumer choice, prices, production, and wages.  

The paper also helps us understand the pricing of many everyday items, e.g., smartphones, where two similar devices may have very different prices, and also that these prices may dramatically change based on a fluid consumer choice. It also suggests that preferring goods and services provided by small-branded and local/regional players, rather than buying the "best" may be a better strategy to ensure better wages. 

Next, the key data required of the economy, viz., $T$ and $Y$, is a {\em labor inventory} of the production processes of the economy and is part of some of the standard data sets of countries. Such an inventory could be used to develop a tool allowing each household to compute its {\em labor footprint}, i.e., an understanding of how household consumption brings employment across the economy. One conjecture is that the  consumption preferences of many wage-earners possibly do not support their  own employment. Such an understanding may be useful to these very classes in modifying their personal consumption. Also note that the labor footprint,  while very similar to the GDP calculation, does not need monetization. This is important in its own way.

%The SCM model, may also be applied to standard "comparative advantage" arguments in Ricardian economics, see Appendix E. Here it actually helps illustrate how the social benefits of trade actually transmit as wages across various classes. 

Technically, the computation of an equilibrium, given a $T,Y,U$ is an interesting problem. The A-D connection implies that already-known efficient algorithms will shed some light. The tatonnement process of this paper needs to be strengthened and its "computing" power needs to be enhanced. A study of the pay-off correspondence and the existence of Nash equilibria in general $(T,Y,U)$ markets need to be undertaken. 

\keywords{economic models, wages, inequality, strategic behavior, pricing, labor markets, consumer choice, utility maximization, Fisher markets, duality, technology, Arrow-Debreu market, equilibria  }
\newpage
% Bibliography
\bibliographystyle{ACM-Reference-Format}
\bibliography{biblio}

\appendix
\section{Section \ref{sec:modeling} details}\label{app: sec2details}

\subsection{Production and Consumption sides: Illustrations}
\begin{ex} \label{ex: prodex1}
(Production) Consider the following economy with two labor classes and two goods as follows:
\begin{equation}
T= \left[ \begin{array}{cc} 1 & 0 \\ 5 & 10 \end{array} \right] \: \: \: 
Y = \left[ \begin{array}{c} \delta \\ 10 \end{array} \right] \: \: \: 
\end{equation}
The number of people in labor class is a parameter $\delta $. The price vector $p=[1,1]$. The labor constraints give us the equations:
\begin{equation} \begin{array}{rcl}
q_1 & \leq & \delta \\
5q_1 + 10q_2 &\leq & 10 \end{array} \end{equation}
We see that for $\delta \leq 2$, $q_1 =\delta $ and $q_2 = 1-0.5\delta $, gives us the maximum revenue, which is $p_1 q_1 +p_2 q_2 =1+0.5\delta$. The wages must satisfy $wT=p$ whence they become $w_1 =0.5$ and $w_2 =0.1$. The net wage bill is $0.5\cdot \delta +10\cdot 0.1$ which equals the revenue. 

However for $\delta >2$, we see that $q=[2,0]^T$ and the revenue is fixed at $2$. Also note that $(Tq)_1 =2< \delta $. This forces $w_1 =0$. The wages $w$ must satisfy $wT\geq p$ and since $q_1 >0$, we must have $(wT)_1 =p_1 $. This gives us $w_2 =0.2$. Finally $(wT)_2=2 >p_2=1$ which, of course, is complementary to $q_2 =0$. 
\end{ex}
\begin{ex} \label{ex: prodex2}
(Trade and Production) Consider societies $A$ and $B$ with populations $10$ and $60$ respectively, and with two commodities, viz., $g_1 $, i.e., rice and $g_2 $, i.e., computers. Suppose that in $A$ it takes $4$ units of labor years to produce a bag or rice and $2$ units of labor years to manufacture a computer. For $B$, these numbers are $6$ and $15$ respectively. 
Suppose that the utility function for both societies are the same and are given by family $u(q_1 ,q_2 )=\sqrt{q_1 q_2 }$, where $q_i $ is the production of good $g_i $.

Thus we have:
\begin{equation} T_A = \left[ \begin{array}{cc} 4 & 2 \end{array} \right] \: \: \: 
Y_A = \left[ \begin{array}{c} 10 \end{array} \right] \: \: \: 
T_B = \left[ \begin{array}{cc} 6 & 15 \end{array} \right] \: \: \:
Y_B = \left[ \begin{array}{c} 60 \end{array} \right]
\end{equation}
Thus, the production frontier $P^A$ of society $A$ is given by 
\begin{equation} P^A =\{(q^A_1 , q^A_2 )\: |\:  q^A_1 ,q^A_2 \geq 0 \mbox{ and } 4q^A_1 +2q^B_2 \leq 10\} \end{equation}
while that of $B$ is given by 
\begin{equation} P^B =\{(q^B_1 , q^B_2 )\: |\:  q^B_1 ,q^B_2 \geq 0 \mbox{ and } 6q^A_1 +15q^B_2 \leq 10\} \end{equation}
We can see that the optimal production is given by the vectors $q^A =(1.25,2.5)$ and $q^B =(5,2)$. The joint production frontier for the two societies together is given by:
\begin{equation} T=\left[ \begin{array}{cccc} 
4 & 2 & 0 & 0 \\ 
0 & 0 & 6 & 15 \end{array} \right] \: \: \: 
Y=\left[ \begin{array}{c}
10 \\ 60 \end{array} \right] \end{equation}
The production frontier in terms of rice and computers, both societies taken together, is given by the convex hull of $\{ (0,0),(0,9), (10,5),(12,0)\}$. The optimal production is given by the point $\bar{q}=(10,5)$ which is better than the sum $q^A+q^B$, and is the basis of trade.  
\end{ex}

\begin{ex} \label{ex: conex1}
(Consumption) Here, we illustrate the Fisher market through a small example. Consider two buyers $L_1 $ and $L_2 $, and two commodities $g_1 $ and $g_2 $, with $1$ unit of each available for sale. Let $x_{ij}$ be the allocation of good $j$ to buyer $i$. Suppose that $L_1 $ prefers $g_1 $ twice as much as $g_2 $, while $L_2 $, prefers $g_2 $ thrice as much as $g_1 $. In other words, we may write $U$ as:
\begin{equation} U= \left[ \begin{array}{cc}
2 & 1 \\ 1 & 3 \end{array} \right] \end{equation}
Suppose $W_2 =10$ is fixed. Let us consider first the case when $W_1 =10+\delta$. For $\delta \leq 10$, we claim that the price vector is $\bar{p}(\delta)=(10+\delta,10)$ and that $x_{11}=x_{22}=1$ with $bb_1 =2/(10+\delta)$ and $bb_2 = 3/10$. 
In other words, $L_1 $ consumes $g_1 $ and $L_2 $ consumes $g_2 $. For the price of $g_1$ as $10+\delta$, $g_2 $ remains the most attractive good for $L_2$. Similarly, while $10+\delta =p_1 \leq u_{12}\cdot p_1 =2 \cdot 10$, $g_1 $ remains the most attractive good for $L_1 $. Moreover, as $\delta $ increases, the price of $g_1 $ rises and absorbs all the money of $L_1 $. Thus the prices $p$ are $p=(10+\delta,10)$ for $\delta \leq 10$ and the Fisher graph remains fixed at $\{ (1,1),(2,2)\}$. The net revenue is $20+\delta $, i.e., the net money with the buyers.  

%\begin{center}
%\includegraphics[width=5cm]{smallfisher}
%\end{center}

Things change when $\delta > 10$, and $L_1 $ starts consuming $g_2 $ as well. Thus $x_{11}=1$ and $x_{12}>0$ while $0<x_{22}<1$, and the Fisher graph now becomes $\{ (1,1),(1,2),(2,2)\}$. By the {\em optimal goods} condition for buyer $1$, we have $p_2 /u_{12}=p_1 /u_{11}$. This fixes the price of $g_1 $ as twice that of the price of $g_2 $, since $L_1 $ is consuming both. Thus, if the price of $g_2$ if $p_2$, then the price of $g_1 $ is $2p_2 $. The {\em market clearing} condition gives us two equations:
\begin{equation} \begin{array}{rcl}
10+\delta &=& 2p_2+x_{12}\cdot p_2 \\
10&=& (1-x_{12})\cdot p_2 \end{array}\end{equation}
Solving for $p_2$ and $x_{12}$ gives us:
\begin{equation} \begin{array}{rcl}
p_2&=& \frac{\delta+20}{3}\\
x_{12}&=& \frac{\delta -10}{\delta+20} \end{array} \end{equation}
The prices $p$ are $p=(\frac{2\delta+40}{3}, \frac{\delta+20}{3})$. The net revenue is as before, i.e., $20+\delta $. 
\end{ex}

\subsection{SPLC utilites} \label{app: splc}
An important extension of the linear utility Fisher market is the separable piecewise-linear concave (splc) utility Fisher market. Here, the utility function of each agent is given by an splc continuous function $u_i :(\R^+)^n \rightarrow \R^+$, where $u_i (\bar{x})$ is the utility of the consumption $\bar{x}=(x_1 ,\ldots , x_n)$ to agent $i$, and which in turn is given by 
\begin{equation} u_i (\bar{x})=\sum_j u_{ij} (x_j ) \end{equation}
where each $u_{ij}:\R^+ \rightarrow \R^+$ is splc. For this case, the optimality condition is revised as follows: 

\noindent
\textbf{Optimal Goods:} For every $x_{ij}>0$, let $I_{ij}$ be the interval of slopes of the function $u_{ij}$ at the point $x_{ij}$, i.e. $I_{ij}=[u_{ij}'(x_{ij}^+),u_{ij}'(x_{ij}^-)]$.  Let $BB_{ij}$ = $I_{ij}/p_j$ (whenever $I_{ij}\neq [0,0]$), the interval of possible bangs-per-buck. Note that the interval $I_{ij}$ is a point if $x_{ij}$ lies in the interior of a particular linear segment of the function $u_{ij}$. It is a proper interval only when $x_{ij}$ is on the intersection of two linear segments. The optimality condition then is the requirement that $\cap_{j:x_{ij}>0} BB_{ij} \neq \phi$. This corresponds to the existence of a common bang-per-buck value for all consumption by agent $i$. 

\subsection{The Analysis of \emph{ Tatonnement}} \label{app: tatonnement}
A detailed description of the \emph{tatonnement} iterator function, as described in Section \ref{sec:modeling}, is given below.
\begin{enumerate}
\item Input $\eta_0 =(p_0 , q_0, w_0 ,X_0 )$. Put $n=0$.
\item We first check if the state  $(p_n, q_n, w_n, X_n)$ is an equilibrium, by checking if $(q_n,w_n)\in CP(T,Y;p)$ and $(p_n,X)\in CC(U;q,w)$. If so, then $\eta_n $ is an SM equilibrium. If not, we follow the iterative steps below.  %This is done by checking if $(q_n,w_n)$ is an optimal solution to $f_{\mathcal{P}}$ in the process $\mathcal{P}$ with input $p_n$. Next, we check if $(p_n,X_n)$ satisfy the optimality conditions for the function $f_{\mathcal{C}}$ with inputs $(W_n,q_n)$ (where $W_n=w_n \cdot Y$, the total wages. If it does indeed satisfy both conditions, we declare the point $\eta_n $ as a {\bf \em SM equilibrium}. 
\item Using $p_n$, we first compute $(q_{n+1},w_{n+1})$ by optimizing $f_{\mathcal{P}}$. This is the $n$-th production-side update.
\item We next find $(p_{n+1},X_{n+1})$ through the process $f_{\mathcal{C}}$ using the input $(q_{n+1},W_{n+1})$.  
\item Note that $f_{\mathcal{C}}$ does not set prices for goods not produced. These are set, assuming that a small $\epsilon$ is indeed produced and predicting its price. Thus if $b_1 , \ldots,  b_k$ are the maximum bang per buck values for the players, then 
\begin{equation} p_j = \max_{i}  \frac {u_{ij}} {b_i} \end{equation} 
This tells us that when these Fisher-like prices are offered, for (at least) one player, the maximum bang-per-buck ratio equals the ratio these prices give, making the player buy the good. The computation of $p$ as before and its modification is called $\mathcal{C}(n)$, i.e., the $n$-th consumption-side update. 
\item This completes the description of $\eta_{n+1}$. We go back to Step 2. 
\end{enumerate}
We illustrate an example of SM equilibrium reached through the \emph{tatonnement}.
\begin{ex}
Let us  consider a market with 3 classes and 3 goods with the following specifications for technology, utility, and labor availability.\\
\begin{equation} T=\left[ \begin{array}{ccc}
1 & 0 & 2 \\ 
3 & 4 & 0 \\
0.5 & 2.5 & 2 \end{array} \right] ; \:
U=\left[ \begin{array}{ccc}
1.5 & 0.41 & 0 \\ 
0.58 & 1.1 & 0.2  \\
0.5 & 1.4 & 0.6 \end{array} \right] ; \: 
Y=\left[ \begin{array}{c}
1  \\ 
1 \\
1 \end{array} \right] \: \end{equation}
 Starting with the price vector $[ 0.7379, \ \       0.9379,   \ \ 0.3617 ] $, the tatonnement process converges to an equilibrium point  in 3 iterations. These are represented as rows in the following matrices to show the corresponding prices, production, and wages in each iteration:
 \begin{equation}\left[ \begin{array}{c}
p_1\\
     p_2 \\
     p_3 \\
     p_4
\end{array} \right] = \left[ \begin{array}{ccc}
       2.0408   &    3.8706   &    0.7037 \\
       1.5961    &    3.027    &   1.2973 \\
       1.5099     &  2.8636     &  1.2273 \\
       1.5099      & 2.8636      & 1.2273 
\end{array} \right], \ \left[ \begin{array}{c}
q_1\\
     q_2 \\
     q_3 \\
     q_4
\end{array} \right] =  \left[ \begin{array}{ccc}
0.2632   &    0.0526   &    0.3684\\
            0     &    0.25   &    0.1875 \\
       0.2631      & 0.0526    &   0.3684\\
       0.2631   &    0.0526     &  0.3684  \end{array} \right] ; \end{equation}
\begin{equation} \left[ \begin{array}{c}
w_1\\
     w_2 \\
     w_3 \\
     w_4
\end{array} \right] =\left[ \begin{array}{ccc}
 0.31636    &  0.52007   &   0.16357 \\
            0    &     0.68 &        0.32 \\
     0.086693     &  0.3865  &    0.52681\\
     0.086693      & 0.3865   &   0.52681\\ \end{array} \right]  \end{equation}
 And the allocations $X$ are given by the forest-
 \begin{center}
\includegraphics[width=2.8cm]{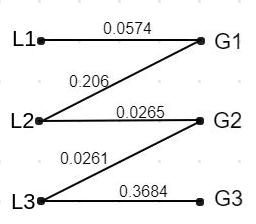}
\end{center}
\end{ex}
\noindent Next, we provide an example where the process fails to converge.
\begin{ex} \label{app: tatonnementfails}
Let us now consider a market with  3 classes and 3 goods with the following specifications.\\
\begin{equation} T=\left[ \begin{array}{ccc}
0.05 & 1 & 0.9 \\ 
0.5 & 0.8 & 0.15 \\
0.4 & 0.5 & 0.4 \end{array} \right] ; \:
U=\left[ \begin{array}{ccc}
0.2 & 0.3 & 0.8 \\ 
0.9 & 0.2 & 0.4 \\
0.25 & 0.85 & 0.33 \end{array} \right] ; \: 
Y=\left[ \begin{array}{c}
10  \\ 
10 \\
10 \end{array} \right] \: \end{equation}
Tatonnement does not converge here, instead, the following two states are reached that stay in a loop -
\begin{equation} q=\left[ \begin{array}{ccc}
4.35 & 9.78 & 0 \\ 
14.7 & 0 & 10.3 \end{array} \right] ; \:
p= \left[ \begin{array}{ccc}
0.11  &  0.05  &  0.14 \\
 0.04 &   0.12  &  0.05 \end{array} \right] ; \: 
w =\left[ \begin{array}{ccc}
    0.52  &  0.48    &     0 \\
    0.12 &        0  &  0.88 \end{array} \right]  \end{equation}

\iffalse
\[ q=\left[ \begin{array}{cc}
4.35 & 14.7  \\ 
9.78 & 0 \\
0 & 10.3 \end{array} \right] ; \:
p= \left[ \begin{array}{cc}
0.11  & 0.04\\
0.05  &  0.12 \\
 0.14 &   0.05 \end{array} \right] ; \: 
w =\left[ \begin{array}{cc}
    0.52  &  0.12 \\
    0.48 &     0 \\
    0  &  0.88 \end{array} \right]  \]
  \fi  
\end{ex}

 In Example \ref{app: tatonnementfails}, two states toggle cyclically. In this case, it is not possible for a state to exist with all goods and classes simultaneously active. The solution involves two states, with class-1 and good-1 active in both the states and other goods and classes alternate between the two.  In general, we see the following necessary condition for an equilibrium: For all goods ($j$) that are not active in the market, 
\begin{equation} p_j = \max_i \frac{u_{ij}}{b_i} \text{ (U- Level}) < \sum_i T_{ij}w_i \text{ (T- Level})\end{equation}
As we see earlier, if a good is not produced, it is allotted a Fisher-like price in the \emph{tatonnement}, which is the $U$-level defined above. All produced goods have prices greater than or equal to their $U$ levels, as guaranteed by the Fisher market. The $T$-level, on the other hand,  is defined as the amount of money required to produce a good. It is then clear that if $U$-Level for an unproduced good $j$ is more than its $T$-Level i.e. $p_j > (w \cdot T)_j$, then being profitable,  good $j$ becomes active in the next iteration, by perhaps pushing a less efficient good $j'$ out of production. If subsequent iterations assign good $j'$ a $U$-level price larger than its $T$-level, then it may become active again. 

As seen from the examples, this method does work similarly to the {\em tatonnement} process given in  Walrus' theory of general equilibrium \cite{uzawa1960walras}. It too starts with a price vector, computes production and wages, and produces the next set of prices based on these market variables. It is clear that the process terminates if and only if it attains an equilibrium. From Example \ref{app: tatonnementfails}, it can be observed that the process may not always converge, and there may be toggling states. Moreover, it can be observed that equilibria whose Fisher forests are disconnected are unlikely to arise from the above process, even though they are fixed points. 

\section{Section \ref{sec: ccg} details}\label{app: sec4details}
\subsection{Proof of Theorem \ref{thm: T1}} \label{app: T1proof}
\thmone*
 \begin{proof}
Let us consider an economy with $m$ labor classes and $n$ goods.  Without loss of generality, we examine the existence of equilibrium so that all labor classes and goods are active. Let us  $n-m$ as the deficit $`def$'. We first consider the case $k-1 < def$ and prove that no equilibrium can exist in this case. We then consider the case $k-1 \geq def$ and state the possibilities of equilibrium by giving a system of solvable equations. With this, we prove that for any generic or non-generic equilibrium, it is possible to find an arbitrarily close  generic equilibrium by changing the control variables i.e $U'$. We conclude the proof by giving an example of a $2 \times 3$ case. 

\textbf{ $k-1 < def$}:
Let us assume that there exist $p,q,w$ such that $p= wT$, $Tq=Y$ and $p$ is the Fisher market solution for the market set up by $U, q$ and $w$ i.e. there exists an SM  equilibrium ($p,q,w, X$). Using the prices, it is possible to generate the Fisher forest. We assume that there are $k$ trees in the forest so that $k-1 < def$. \\ Setting up the profitability conditions, we have, 
\begin{equation}   \left[  \begin{array}{cccccc}
   T_{11}  & T_{21} & ...& T_{m1}  & | & p_1 \\
     T_{12}  & T_{22} & ...& T_{m2}  & | & p_2 \\
     . & . & . & . & | & . \\ 
          . & . & . & . & | &  . \\ 
T_{1n}  & T_{2n} & ...& T_{mn}  & | & p_n
\end{array} \right]  \left[  \begin{array}{c}
     w_1 \\
     w_2 \\
     . \\
     .\\
     w_m \\
     -1
\end{array} \right] = \left[ \begin{array}{c}
     0  \\
     0 \\
     . \\
     . \\
     0
\end{array} \right] \end{equation}
We have $k \leq def$.  Let us assume that the goods $g_1 , \ldots ,g_k $ are in distinct components and  for the rows $R_1 ,\ldots , R_k \in \R^n$, where $R_i (i)=1$, $R_i (j)=0$ if $g_j $ is not in the component of $g_i $, and finally $R_i (j)=\alpha_{ij}$ a monomial in the entries of $U$ which relates the price of $g_i $ with $g_j $. If we are to assume that $p_1 ,\ldots , p_k $ as temporarily known, then the equation $wT=p$ leads us to the equation:
\begin{equation} \left[ \begin{array}{cccc} p_1 & \ldots & p_k & -w \end{array}\right]\left[ \begin{array}{c}
R_1 \\
\vdots \\
R_k \\
T \end{array} \right] =0  \label{equation} \end{equation}
Thus, in the case of $k= def$, we can rearrange the matrix so that the determinant of the following matrix is zero. 

\begin{align}
 \left[ \begin{array}{cccccccc}
   T_{11}  & T_{21} & ...& T_{m1}  & | & p_1 & . & .\\
     T_{12}  & T_{22} & ...& T_{m2}  & | & \alpha p_1 & . & . \\
     . & . & . & . & | & .  & p_2 & . \\ 
     . & . & . & . & | &  . & \beta p_2 & .\\ 
          . & . & . & . & | &  . & . & .\\ 
          . & . & . & . & | &  . & . & .\\ 
T_{1n}  & T_{2n} & ...& T_{mn}  & | & . & . & \gamma p_k 
\end{array} \right]
\end{align}

Here, $\alpha, \beta.., \gamma$ represent the algebraic relationships between prices of goods belonging to the same components of the forest. Each column on the right corresponds to a tree. We have rearranged $p$ so that there are $k$ columns. It is clear that the $p$'s can be taken out from the determinant so that the condition translates to an algebraic relationship between $\alpha, \beta.., \gamma$ i.e.  ratio of entries of $U$ and coefficients of $T$. This contradicts the assumption that $U$ and $T$ are independent. In case $k < def$, we can delete $def-k$ number of rows from the above matrix so that it becomes a $(m+k) \times (m+k) $  matrix. The determinant of this matrix is zero, which again contradicts the assumption. Thus, there does not exist any equilibrium when $k-1 < def$.

\textbf{$k-1 \geq def$}:
We again consider (\ref{equation}), which is a system of $n$ homogeneous equations in $m+k$ unknowns. Thus, if the first $m+k-n=k'$ prices were fixed, then all prices and wages are defined using these $k'$ prices. Here, $k' \geq 1$.

Next, let us consider the conservation of money in each of the $k$ components $G_i$ and use these to solve for $q$. We of course have $wT=p$, whence $wY=wTq=pq$. Thus the overall conservation of money is already available and there are only $k-1$ independent equations. Next, We have the $m$ equations $Tq=Y$. Now, for each component, the money available is given as $\sum_{L_j \in G_i} Y_j w_j $. On the goods side, we have $\sum_{g_j \in G_i } p_j q_j$. 
By the earlier argument, these translate into $k$ algebraic equations in the variables $p_1 ,\ldots ,p_{k'}$ and $q_1 ,\ldots ,q_n$, which are homogeneous in $p$'s. By dividing throughout by $p_1 $, we get $k-1$ equations in $k'-1$ variables. This gives us a total of $m+k-1$ equations in $n+k'-1$ variables. Substituting $k'=m+k-n$ gives us $m+k$ equations in $m+k-1$ variables. These may be solved to obtain all quantities.  Section-6 shows the existence of a disconnected $(2 \times 2$ market i.e. $k-1 = 1 > def =0$, which confirms that $k-1\geq def$.

Let us now consider the case where this equilibrium is generic. We recall the definition of generic equilibrium points- we say, $\eta $ is a generic equilibrium point if (i) for $j\not \in J(\eta )$, we have $(wT)_j >p_j $, and (ii) for $(i,j) \not \in F(\eta )$, we have  $u_{ij}/p_j < \max_k u_{ik}/p_k $. In this case, the set of tight equations is the same as the set given by $(I,J,F)$ i.e. $Tq=Y$, $wT= p$, and the maximum bang per buck and money conservation constraints given by the Fisher market. Therefore, it is possible to change a $u_{ij}$ by $\delta$ and retain the equations- $(i,j) \not \in F(\eta )$, \   $u_{ij}/p_j < \max_k u_{ik}/p_k $, which result in a generic equilibrium point. It is clear that the equations $(wT)_j >p_j $ hold true. On the other hand, when we have a non-generic point,  i.e. $(wT)_j =p_j $ for $j\not \in J(\eta )$ or $u_{ij}/p_j = \max_k u_{ik}/p_k $ for $(i,j) \not \in F(\eta )$, the number of tight equations is more than those given by its $I,J,F$.  We can then  relax an equation to reach a generic point.
  
In all, we have proved that   no equilibrium exists if  $k < def+1$.  Moreover, we can see that the condition $k-1 \geq def$ is necessary but not sufficient, as there are more conditions like the nonnegativity of variables and the optimality of prices. We also note that the condition of a generic equilibrium point is crucial. To see this, consider a non-generic equilibrium point with $(i,j)$ satisfying the bang per buck condition for buyer $i$ such that $x_{ij} = 0$, i.e.,  $(i,j) \not \in F(\eta)$. This increases the number of actually connected components, thereby adding an extra tight constraint, which violates the generic economy argument. Thus, if there are $z$ such zero-weight edges, the number of connected components is at least $def+1+z$.  We now complete the proof showing a $2 \times 3$ case in Example \ref{ex: 2n3case}. 
\begin{ex}\label{ex: 2n3case}
Let us consider an economy with the following specifications. 
\begin{equation} T =
 \begin{bmatrix}
1 & 0.5 & 0.8 \\
0.5 & 1.5 & 0.9 
\end{bmatrix}  
\ \ \ \ \ U =
 \begin{bmatrix}
    1  &  0.75  &  0.8 \\
    0.4 &  0.9   & 0.7
\end{bmatrix}  
\ \ \ \ \ Y=
\begin{bmatrix}
    1 \\
    1
\end{bmatrix}
\end{equation} 
Here, two labor classes, with one labor unit each, can produce three goods using the above technology constraints. %We show that an equilibrium exists in this economy so that all goods and classes are active, as opposed to the conventional square matrix solution where the number of active classes and goods is the same. \\ 
Let us consider $p, q, w$ as follows. 
 \begin{equation} p, q, w^T  = \left[ \begin{array}{c}
 0.5235    \\
    0.8324  \\
    0.6470 \end{array} \right], \left[ \begin{array}{c}
      0.5639\\  0.2426 \\ 0.3934 \end{array} \right], \left[ \begin{array}{c}
      0.2952   \\ 0.4565 \end{array} \right] \end{equation}
We see that  these variables follow the equations- 
\begin{equation} p^T = \left[ \begin{array}{ccc}
 0.5235      &     0.8324      &    0.6470 \end{array} \right] \: = 
 w^T \cdot T = \left[ \begin{array}{cc}
      0.2952   & 0.4565 \end{array} \right] \cdot 
      \left[ \begin{array}{ccc} 1 & 0.5 & 0.8 \\
0.5 & 1.5 & 0.9 \end{array} \right] \end{equation}
And 
\begin{equation} Y = \left[ \begin{array}{c}
      1 \\
    1 \end{array} \right]  \ = \:
      T \cdot q = \ \left[ \begin{array}{ccc} 1 & 0.5 & 0.8 \\
0.5 & 1.5 & 0.9 \end{array} \right] \cdot \left[ \begin{array}{c}
      0.5639 \\  0.2426 \\ 0.3934 \end{array} \right] \end{equation}
According to the definition of equilibrium in the SCM model, we see that $q$ and $w$ are dual variables of each other and follow the market constraints. On the consumption part, when $w$ and $q$ are given as inputs to the Fisher market, the solution  is the price vector $p$ with  the following allocations-
\begin{equation} X = 
  \left[ \begin{array}{ccc}  0.5639    &     0  &  0 \\
0  &  0.2426   & 0.3934  \end{array} \right] \end{equation}
That is, class 1 only consumes good 1 while class 2 buys goods 2 and 3. This shows an equilibrium in a general $m \times n$ economy with an unequal number of active goods and classes, with the number of connected components $= n-m+1 = 2$. \end{ex} \end{proof} 

\subsection{Proof of Theorem \ref{thm: genericeq}} \label{app: genericeqproof}
\thmtwo*
 \begin{proof}
  The combinatorial data does give us the relationships $w_I T_{I,J}=p_J$ and $T_{I,J}q_J =Y_I$. From this, it follows that $|I|\leq |J|$ for otherwise there would be an algebraic relationship between $T$ and $Y$. However, if $|I|=|J|$, and $k=1$, then $w$ is determined by $p$ and $q$ by $Y$. Since the forest $F$ is connected, $p$ is determined up to a scalar multiple and thus the whole system is solved. In summary, if $|I|=|J|$ and $k=1$, there is a unique $\eta $ sitting above this combinatorial data. However, in the general case, we must first append to the variables $q_J $, a suitable subset $\{ p_1 ,\ldots ,p_{k'}\}$. The $w$'s and the remaining $p$'s are expressible as homogeneous linear combinations of these $k'$ prices. Next, to the linear set of equations $T_{I,J}q_J =Y_I$ we add the $k-1$ independent money conservation equations to solve these simultaneously. Unfortunately, the conservation equations involve terms $p_i q_i $'s and are quadratic in the chosen variables with coefficients in the entries of $U$. Once these are solved, all other variables are known and the equilibrium point is reconstructed. Thus, over a given combinatorial data, we get an algebraic system with coefficients in $U$, but with finitely many solutions. By standard algebraic geometry results, other than over a closed algebraic set, these solutions depend smoothly on the entries of $U$. 
  \end{proof}
\section{Section \ref{sec: correspondence} details}
\subsection{ Correspondence and the Nash Equilibria in General $2\times 2$ markets} \label{app: stratsinccg}
We now consider a general scenario involving 2 classes and 2 goods where $T, Y$ are given as constants. Here,  given $T$ and $Y$, we rigorously solve for markets, and compute the conditions for a forest to be in equilibrium. We argue that wages are continuous functions of utilities and payoffs and allocations are continuous for each forest. Moreover, we generalize the results given in section 6 by proving the existence of a correspondence between the strategy space and utilities and also that utilities on the boundaries are linear combinations of those of the forests on both sides. We also look at the necessary conditions for Nash equilibria to exist.\\
\noindent
\textbf{Market and specifications: }\\
Let us consider a $2 \times 2$ market with inputs $T$, $Y$. Let $U^t$ be the true utility matrix. Let $U$ stand for the strategy matrix with $\frac{u_{11}}{u_{12}} = \alpha$,  $\frac{u_{21}}{u_{22}} = \beta$.

Let $p_1/p_2$ = $\alpha$ where $\alpha$ is the appropriate ratio of utilities when the forest is connected or the ratio of some technology inputs when the forest is disconnected. For example, if class-1 buys good 1 and class-2 buys good-2, we have $p_2T_{21}^{-1} = p_1T_{12}^{-1}$.  Assuming total money in the economy constant and equal to 1, we can solve for $q$ in terms of $T,Y$.
\begin{align}q_1 = T^{-1}_{11} Y_1 + T^{-1}_{12}Y_2\\
 q_2 = T^{-1}_{21} Y_1 + T^{-1}_{22}Y_2 \end{align}
We assume that $T$ and $Y$ are such that $q_1, q_2$ are positive. Solving for wages, we get
\begin{align}
w_1 = \frac{Y_1(\alpha T_{11}^{-1} + T_{21}^{-1})}{Y_1(\alpha T_{11}^{-1} + T_{21}^{-1}) + Y_2(\alpha T_{12}^{-1} + T_{22}^{-1})}\\
w_2 = \frac{Y_2(\alpha T_{12}^{-1} + T_{22}^{-1})}{Y_1(\alpha T_{11}^{-1} + T_{21}^{-1}) + Y_2(\alpha T_{12}^{-1} + T_{22}^{-1})}\end{align}
Like the case described before, we know that for a $2 \times 2$ market, there are six possible forests and one cycle. Also, there are two possible $(1 \times 1)$ markets.  The possible $(2 \times 2)$ forests are-
\begin{center}
\includegraphics[width=8cm]{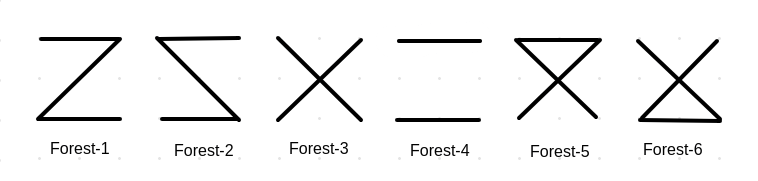}
\end{center}
where each is a solution under specific conditions, which place it in a specific zone. For both classes to be active, we first require $\frac{p_1}{p_2} > \min\left( \frac{-T_{21}^{-1}}{T_{11}^{-1}}, \frac{-T_{22}^{-1}}{T_{12}^{-1}} \right) = \min \left( \frac{T_{21}}{T_{22}}, \  \frac{T_{11}}{T_{12}} \right) $.  For connected forests such that class $i1$ buys only good $j1$ and class $i2$ buys $j1, j2$, we have $w_{i1} < p_{j1}q_{j1}$ and $w_{i2} > p_{j2}q_{j2}$. This ensures that all allocations are positive. Lastly, we require the bang per buck conditions to hold.  \\
For example, consider the first graph given above, where $p_1/ p_2 = u_{21}/ u_{22}= \beta$. This graph constitutes  a generic equilibrium  if and only if, 
\begin{enumerate}
    \item $\frac{p_1}{p_2} = \beta >  \frac{T_{21}}{T_{22}}, \  \frac{T_{11}}{T_{12}}$ i.e. wages are positive. 
    \item $\beta > \frac{T_{21}^{-1}Y_1}{T_{12}^{-1}Y_2}$  i.e. allocations are positive. \\
    $w_1 = Y_1 ( p_1T^{-1}_{11} + p_2T^{-1}_{21}) < p_1q_1 = p_1 (T_{11}^{-1}Y_1 + T_{12}^{-1}Y_2)$  or \\
    $w_2 = Y_2 ( p_1T^{-1}_{12} + p_2T^{-1}_{22}) >  p_2q_2 = p_2 (T_{21}^{-1}Y_1 + T_{22}^{-1}Y_2)$ 
    \item And, $\beta < \alpha = \frac{u_{11}}{u_{12}}$ i.e. class-1 gets more bang per buck from good-1 than good-2. 
    $\left( \frac{u_{11}}{p_1} > \frac{u_{12}}{p_2} \right)$. 
\end{enumerate}
It can be seen that these conditions define an open set in the strategy space $\mathcal{U}$ which has a unique combinatorial data $(I,J,F)$. In the above example, it is given by then $I = \{1,2\}, J= \{1,2\}$ and $F$ as forest-1.  We note that the third condition can be relaxed i.e. $\beta \leq \alpha$, which results in a non generic equilibrium at $\alpha = \beta$. This is because multiple allocations, including the forest-1 allocations are possible for this condition.  We discuss  the cycle i.e. the set $\alpha = \beta$  in the next section.

In addition  to the forests described before, we note that it is possible for wages of a particular class to become zero. When $ p_1/p_2 \leq \min \left( \frac{T_{21}}{T_{22}}, \  \frac{T_{11}}{T_{12}} \right) $, only one class is active and it produces exactly one good. We refer to its corresponding domain in the $U$-space as Zone 7. In this case, if class $i$ produces good $j$, it is possible to compute its production and the utility $U_i$ that it gives. We note that $U_i$ is smaller than the total utility of both classes combined, as there is more net production when both classes are active.\\
\noindent
\textbf{Continuity analysis:}\\
 For the forests described above, the combinatorial  data  $(I,J,K)$ is defined by the following conditions 
\begin{table}[h]
\centering
\begin{tabular}{|c|c|c|c|c|c|}
\hline
 Forest 1 & Forest 2 & Forest 3 & Forest 4  & Forest 5 & Forest 6  \\ \hline
$\alpha \geq \beta = \frac{p_1}{p_2}$ & $\frac{p_1}{p_2}=\alpha \geq \beta$ & $\beta \geq \frac{p_1}{p_2}= \frac{T_{22}^{-1}Y_2}{T_{11}^{-1}Y_1} \geq \alpha$ & $\beta \leq \frac{p_1}{p_2}= \frac{T_{21}^{-1}Y_1}{T_{12}^{-1}Y_2} \leq \alpha$ & $\beta \geq \alpha = \frac{p_1}{p_2}$ & $ \frac{p_1}{p_2} = \beta \geq \alpha$ \\ \hline 
 $\beta >  \frac{T_{21}}{T_{22}}, \frac{T_{11}}{T_{12}}$ & $\alpha >  \frac{T_{21}}{T_{22}}, \frac{T_{11}}{T_{12}}$ & $\frac{T_{22}^{-1}}{T_{11}^{-1}} >  \frac{T_{21}}{T_{22}}, \frac{T_{11}}{T_{12}}$ & $ \frac{T_{21}^{-1}}{T_{12}^{-1}} >  \frac{T_{21}}{T_{22}}, \frac{T_{11}}{T_{12}}$ & $\alpha >  \frac{T_{21}}{T_{22}}, \frac{T_{11}}{T_{12}}$ & $\beta >  \frac{T_{21}}{T_{22}}, \frac{T_{11}}{T_{12}}$ \\ \hline
$\beta > \frac{T_{21}^{-1} Y_1}{T_{12}^{-1}Y_2}$  & $\alpha < \frac{T_{21}^{-1}Y_1}{T_{12}^{-1}Y_2}$ &  & & $\alpha > \frac{T_{22}^{-1} Y_2}{T_{11}^{-1}Y_1}$ & $\beta < \frac{T_{22}^{-1} Y_2}{T_{11}^{-1}Y_1}$ \\ \hline
\end{tabular}
\end{table}

We also graphically represent these six forests by marking their zones. We number these forests from 1 to 6, ordered as given above. 7th zone refers to the case where only one class is active and in equilibrium. Here, the arrows indicate the direction of optimization for the classes, considering $U^t$ as the strategy matrix comprising of all ones.
\begin{center}
\includegraphics[width=9cm]{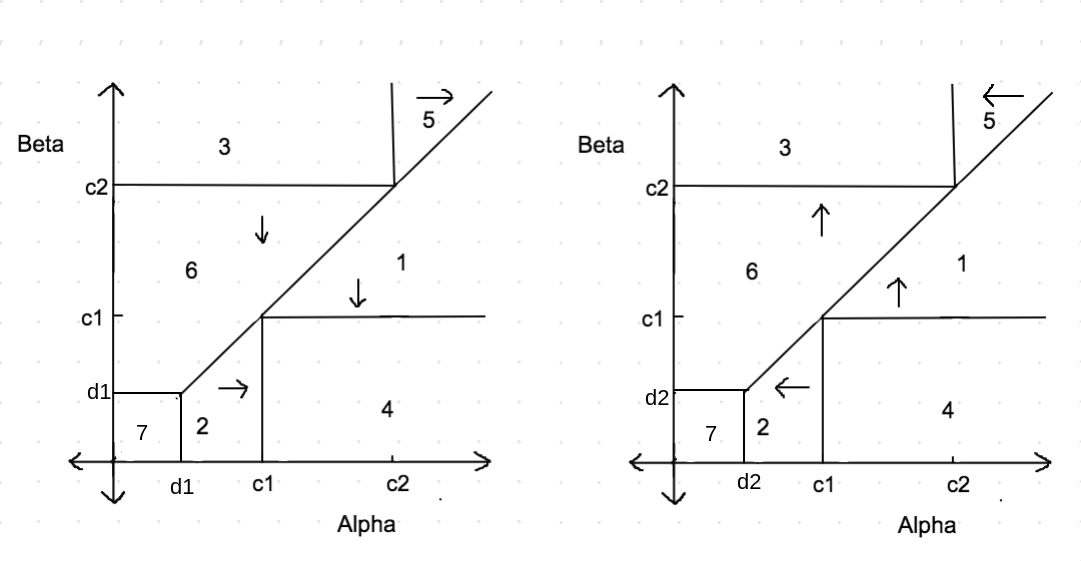}
\end{center}

 Here, $c1 = \frac{T_{21}^{-1} Y_1}{T_{12}^{-1}Y_2}$ and $c2= \frac{T_{22}^{-1} Y_1}{T_{11}^{-1}Y_1}$ and we have assumed $c2>c1$. The first graph refers to the condition $\det (T) > 0 $ i.e. $d1 = \frac{T_{11}}{T_{12}} > d2 = \frac{T_{21}}{T_{22}}$ and second refers to $\det (T)< 0$.  There are four possible schemes corresponding to $c1, c2$ and $\det (T)$, of which we have considered two with $c2>c1$.  In the first scheme, class-1 always gets wages and vice versa.    Note that the region for the 7th zone can extend depending on $d1$. If $d1 > c1$, as we will see, the continuity of wages implies zone 4 becomes a part of zone 7. 

We see that there are clear ranges of $\alpha$ and $\beta$ which define the solution forest. First, we observe that as wages are rational functions involving only $p1/p2$ i.e. the relevant $\alpha, \beta$ in case the forest is connected, wages are always continuous within a forest. For disconnected forests, wages are constant. It is important to note that wages are bounded, $w_1+w_2 = 1$. At the boundary of zone 4, $p1/p2$ i.e. $\alpha$ or $\beta$ equals $c1$, which determines the wages. In other words,  as $\alpha$ or $\beta$ continuously approach $c1$, limit of $w_i$ equals $w_i$ for zone 4.  Similarly, zone 3 allows for smooth transitions. It is clear that wages are continuous at the boundary $\alpha = \beta $ also, as one relevant variable smoothly transitions  into another. Similarly, in zone 7, wages continuously converge to zero.

Next, we observe that allocations and utilities are continuous within each forest, where they are functions. For disconnected forests, they are constants. For connected forests, where there are three edges,  we see that 2 allocations are linearly dependent on $p_1/p_2$ and one is constant. For example, forest 1 has the following allocations- 
\begin{align}
x_{11} = w_1/p_1 = Y_1T_{11}^{-1}  + Y_1 T_{21}^{-1}/\beta\\
x_{21} = q_1- x_{11},   \ \ x_{22} = q_2
\end{align}Since utilities (true and strategic) are linear functions of allocations, those are continuous too. Moreover, they are either linear or rational functions of $\alpha, \beta$. Though we have considered a special strategy matrix with all ones here, the result is general.\\
\noindent
\textbf{Transitions and the cycle:}\\
 Firstly, we  note that except for the $\alpha= \beta$ line, other boundaries allow for a continuous function where for each pair of ($\alpha, \beta$), a unique allocation/utility exists.  Let us see this by changing $\alpha$ or $\beta$ continuously to approach a forest/cycle from another. We see that when a tree is disconnected by making an edge weight zero, allocations and utilities continuously transform. For example, in zone -1,  when $\beta$ approaches $c1$ from above, $p_1/p_2$ equals $c1$ where zone-4 starts, and $x_{11}$ becomes $q_1$.  Therefore, forest 3,4 include their boundaries while defining utility functions i.e. boundaries have same utilities as the forests.  In case of $\alpha=\beta$, first an edge is added to get a cycle, where multiple allocations are possible and then another edge is removed. 
 
 The governing equations for a connected forest, say forest 1 are : 
 $w_1 = x  < p_1q_1$, $y < p_1q_1$ and $z = p_2q_2$ along with $x +y = p_1q_1$. When we increase $\beta$ to approach the cycle, the equations become $x, y < p_1q_1$, and $z,v< p_2q_2$ along with $x+y = p_1q_1$ and $z+v =p_2q_2$. Thus, the maximum value of $z$ is $p_2q_2$ where zone 1 is  overlapping  and minimum value is 0 where $v = p_2q_2$ when zone 6 overlaps. 
 \begin{center}
\includegraphics[width=8cm]{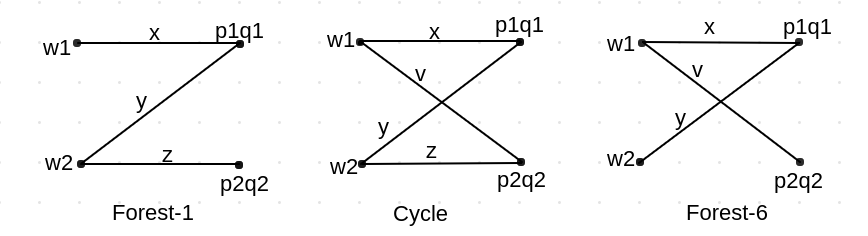}
\end{center}
 The net utilities for forest 1 and 6 are-
 \begin{align}
 (b_1,b_2)_1  = \left(\frac{w_1}{p_1} , \ \ q_1+q_2 - \frac{w_1}{p_1}\right)\\
(b_1,b_2)_6 = \left(q_2 + q_1 - \frac{w_2}{p_1}, \ \ \frac{w_2}{p_1}\right)\end{align}
 And for the cycle, these can be computed in terms of $v,z$-
 \begin{align}(b_1,b_2 ) & = \left(\frac{w_1 - v}{p_1} + \frac{v}{p_2}, \ \ \frac{w_2-x}{p_1} + \frac{z}{p_2}\right)\notag \\
 & = \left(\frac{w_1}{p_1} + v\left(\frac{1}{p_2} - \frac{1}{p_1}\right), \ \ \frac{w_2}{p_1} + z\left(\frac{1}{p_2} - \frac{1}{p_1}\right)\right)\end{align}
 Firstly, we note that the ratio $1/p_2 - 1/p_1$ is a constant for any fixed $\alpha, \beta$. Let this be a positive constant. Now, it is clear that when $v = 0$, $b_1$ for the cycle is minimum which coincides with that of forest-1. Similarly, when $v = p_2q_2$, the maximum coincides with that of forest-6. This establishes that utilities at the cycle are linear combinations of those on the boundaries, so that for each $\alpha$ on the $\alpha = \beta$ line, utilities make a smooth transition. This makes us view the cycle as a linear combination of two forests depending on values of $\alpha$ and $\beta $. Though we have considered $U^t$ made up of all ones, this result is valid for  any matrix $U^t$. 
 
 We see that the results generalize the example given in section 6. Using the same proof, we can now argue that there exists a correspondence $\mathcal{N}$ between the strategy space and utilities. Also, continuous payoffs for the forests and the range for the line $\alpha = \beta$ allow for a 2 dimensional manifold whose boundaries are given by those of the forests. Next, we look at the conditions for a Nash equilibrium to exist. \\
 \noindent
\textbf{Existence of a Nash equilibrium: } \\
We observe that allocations are functions of these variables, either linear or inversely related ( $U_i = a+ b \alpha $ or $c + d / \alpha $), when the forests are connected.  In those cases, maximization can occur only at the boundaries i.e. at the points where the transition of graphs occurs. When the forests are disconnected, allocations and utilities are constants for a range of $\alpha, \beta$. Therefore, if any Nash equilibrium exists, it should be at the transition points or forest 3 or 4. 

Let us now analyze the first scheme. Looking at the arrows, we know that forest-3 cannot be an equilibrium, which leaves $\alpha= \beta$ and forest-4 as possibilities.  Class-2 prefers forest-4 to forest-2, and similarly, class-1 prefers forest-4. For a point in forest-4 to be an equilibrium, $\beta > d1$ and $\alpha > c1$. In this region, given any $\alpha$, class-2 prefers forest-4 to forest-1,3 and 5. We now have to find a region for $\beta$ so that $\alpha$ maximises $b_1$ in that region. Note that when $c1 > \beta > d1$, $b_1$ is constant as a function of $\alpha$ in zone 6. When it enters zone-2, $b_1$ increases until it is in zone 4. Thus, we prove Nash equilibria exist in the region $d1 < \beta < c1$ and $ c1 < \alpha $. Similarly, it can be proved that Nash equilibria exists in scheme-2 in zone-3 for $d2 <\alpha < c1$ and $c2 < \beta$. This equilibrium is, however, subject to the condition that the forest exists i.e. the region is feasible. For example, in scheme-1, forest 4 exists when $d1 < c1$ i.e. $\frac{T_{11}}{T_{12}} < \frac{T_{21}^{-1} Y_1}{T_{12}^{-1}Y_2} $ and $\beta < c1 < \alpha$. 

This also settles the question of the existence and uniqueness of equilibrium in a $2 \times 2$ case. 
\subsection{Description of zones in Table \ref{tab: forestdescriptions}} \label{app: foresttable}
\begin{table}[H]
\centering
\small
\begin{tabular}{|l|l|l|l|l|l|}
\hline &&&&&\\[-0.5em]
 & Forest-1                                                                                                      & Forest-2                                                                                                     & Forest-3         & Forest-4              & Forest-5                                                                                \\ \hline &&&&&\\[-0.5em]
 Prices & $\beta=  \frac {p_1}{p_2}   > 1/4$ & $\alpha= \frac {p_1}{p_2}  > 1/2$ & $1/4 < \beta  =\frac {p_1}{p_2}  < 1/2 $ & $\frac{p_1}{p_2} = \frac{1}{2}$ & $p_1$ = 0\\ 
 \hline &&&&&\\[-0.5em]
 Alloc - 1 & $x_{11} = 8 - \frac{2}{\beta}$ & $x_{11} = 8 - \frac{4}{\alpha}$, \ \ $x_{12}=2$ & $x_{12}= 8 \beta - 2$  & $x_{12} = 2$ & $x_{11},x_{12} = 0 $                     
 \\ \hline &&&&&\\[-0.5em]
  Alloc - 2 & $x_{21} = \frac{2}{\beta} , \ \  x_{22} = 2 $                              & $x_{21} = \frac{4}{\alpha}  $ & $x_{21} = 8 , \ \ x_{22} = 4 - 8\beta$ &   $x_{21} = 8$ & $x_{22}=4$                                 \\
     \hline &&&&&\\[-0.5em] $b_1$   & $ b_1= 8 - 2/ \beta$                                                                                        & $ b_1= 10-4/\alpha $                                                                                   & $ b_1 = 8\beta - 2$  & $b_1 = 2$ & $b_1=0$ \\
     \hline &&&&&\\[-0.5em] $b_2$ &
     $ b_2= 2+ 2/\beta $                                                                                      & $ b_2= 4/\alpha $                                                                                             & $ b_2=    12-8\beta$  & $b_2 =8 $     & $b_2 = 4$                                                                \\
     \hline &&&&&\\[-0.3em]
     Wages-1  & $w_1 = 1- \frac{2}{1+4 \beta}$ & $w_1=1 - \frac{2}{1+4 \alpha}$ & $w_1 = 1- \frac{2} {1+4 \beta} $ & $w_1 = 1/3$ & $w_1=0$ \\
     \hline &&&&&\\[-0.3em]
     Wages-2  & $w_2 = \frac{2}{1+4 \beta}$ & $w_2= \frac{2}{1+4 \alpha}$ & $w_2 = \frac{2} {1+4 \beta} $ & $w_2 = 2/3$ & $w_2=1$ \\
     \hline &&&&&\\[-0.3em]
     Zones  & $\alpha \geq \beta > 1/4$  & $\beta \geq \alpha > 1/2$ & $1/4 < \beta< 1/2$ and  $\beta \geq \alpha$ & $ \alpha \leq 1/2 \leq \beta $ & $\beta \leq 1/4$
     \\ \hline
\end{tabular}
\caption{$SM(T,Y,U)$ characteristics for the forests in Fig. \ref{fig: ccgexforests}} 
\end{table} 

\end{document}